%
\PassOptionsToPackage{dvipsnames}{xcolor}
\documentclass[runningheads,envcountsame]{llncs}
\usepackage[title]{appendix}
\usepackage{graphicx}
%

\usepackage{latexsym} 
\usepackage{verbatim}
\usepackage{cite}
\usepackage{amsmath}
\usepackage{amsfonts}
\usepackage{amssymb}
\usepackage{tikz}
\usetikzlibrary{positioning, shapes, arrows, snakes}
\usetikzlibrary{backgrounds}
\usepackage{tikz-cd}
\usepackage{todonotes}
\usepackage[dvipsnames]{xcolor}
\usepackage{mathpartir}
\usepackage{bussproofs}
\usepackage{enumitem}

\usepackage{xcolor}
\newcommand{\ld}[1]{\textcolor{MidnightBlue}{#1}}
\newcommand{\rid}[1]{\textcolor{RedOrange}{#1}}
\newcommand{\rd}[1]{\textcolor{RedOrange}{#1'}}
\newcommand{\mt}[2]{
\ifthenelse{\equal{#2}{}}{\langle #1,\rd{#1}\rangle}{\langle #1_{#2},\rd{#1_{#2}}\rangle}
}
\newcommand{\mtsq}[2]{
\ifthenelse{\equal{#2}{}}{[#1,\rd{#1}]}{[#1_{#2},\rd{#1_{#2}}]}
}



\begin{document}
\title{Only Connect, Securely\thanks{Supported by Indo-Japanese project \textit{Security in the IoT Space}, DST, Govt of India.}}
%
%
\author{Chandrika Bhardwaj \and
Sanjiva Prasad} 
%
%
\institute{Indian Institute of Technology Delhi, India \\
\email{\{chandrika,sanjiva\}@cse.iitd.ac.in}}
\maketitle              
\begin{abstract}
The \textit{lattice model} proposed by Denning in her seminal work provided secure information flow analyses with an intuitive and uniform mathematical foundation.
Different organisations, however, may employ quite different security lattices.
In this paper, we propose a connection framework that permits different organisations to exchange information while maintaining both security of information flows as well as their autonomy in formulating and maintaining security policy.
Our prescriptive framework is based on the rigorous mathematical framework of \textit{Lagois connections} given by Melton, together with a simple operational model for transferring object data between domains.
The merit of this formulation is that it is simple, minimal, adaptable and intuitive, and provides a formal framework for establishing secure information flow across autonomous interacting organisations.
We show that our framework is semantically sound, by proving that the connections proposed preserve standard correctness notions such as non-interference.

\keywords{Security class lattice \and Information flow
\and
Lagois connection \and Atomic operations \and
Non-interference.
}
\end{abstract}

\section{Introduction}
Denning's seminal work \cite{Denning76} proposed \textit{complete lattices} as the appropriate mathematical framework for questions regarding \textit{secure information flow} (SIF).
An information flow model (IFM) is characterised as
$\langle N, P, SC, \sqcup, \sqsubseteq \rangle$ where:
\textit{Storage objects} in $N$ are assigned \textit{security levels} drawn from a (finite) complete lattice $SC$. $P$ is a set of processes (also assigned security classes as clearances).
The partial ordering $\sqsubseteq$ represents \textit{permitted flows} between classes; reflexivity and transitivity capture intuitive aspects of information flow; antisymmetry helps avoid redundancies in the framework, and the join operation $\sqcup$ succinctly captures the combination of information belonging to different security classes in arithmetic, logical and computational operations.
This lattice model provides an abstract uniform framework that identifies the commonalities of the variety of analyses for different applications -- \textit{e.g.}, confidentiality and trust -- whether at the \textit{language} level or a \textit{system} level.
In the ensuing decades, the vast body of secure information flow analyses has been built on these mathematical foundations, with the development of a plethora of static and dynamic analysis techniques for programming languages \cite{sabelfeld2003language,myers1999jflow,Pottier2003-FlowCaml,liu2017fabric,roy2009laminar,Lourenco2015-ug}, operating systems \cite{Krohn2007-aa,zeldovich2006-osdi,cheng2012aeolus,efstathopoulos2005asbestos,roy2009laminar}, databases \cite{schultz2013ifdb}, and hardware architectures \cite{ferraiuolo2018hyperflow, zhang2015secVerilog-asplos}, etc.
The soundness of this lattice model was expressed in terms of semantic notions of system behaviour, for instance, as properties like non-interference \cite{DBLP:conf/sp/GoguenM82a} by Volpano \textit{et al} \cite{DBLP:journals/jcs/VolpanoIS96} and others.
Alternative semantic notions of security such as safety properties have been proposed as well, \textit{e.g.},  \cite{Boudol2008SIFasSafety}, but for brevity we will not explore these further.

The objective of this paper is to propose a simple way in which large-scale distributed secure systems can be built by connecting components in a secure and modular manner.
Our work begins with the observation that large information systems are not monolithic:  Different organisations define their own information flow policies independently, and subsequently collaborate or federate with one another to exchange information.
In general, the security classes and the lattices of any two organisations may be quite different --- \textit{there is no single universal security class lattice}.
Moreover, \textit{modularity} and \textit{autonomy} are important requirements since each organisation would naturally wish to retain control over its own security policies and the ability to redefine them.
Therefore, fusing different lattices by taking their union is an unsatisfactory approach, more so since the security properties of application programs would have to be re-established in this possibly enormous lattice.

When sharing information, most organisations limit the cross-domain communications to a limited set of security classes (which we call \textit{transfer} classes).
In order to ensure that shared data are not improperly divulged, two organisations usually negotiate agreements or memorandums of understanding (MoUs), promising that they  will respect the security policies of the other organisation.
We argue that a good notion of secure connection  should require reasoning only about those flows from only the transfer classes mentioned in a MoU.
Usually, cross-domain communication involves downgrading the security class of privileged information to public information using primitives such as encryption, and then upgrading the information to a suitable security class in the other domain.
Such approaches, however, do not gel well with correctness notions such as non-interference.
Indeed the question of how to translate information between security classes of different lattices is interesting \cite{deng2017secchisel}.

\paragraph{Contributions of this paper.} \ \
In this paper, we propose a simple framework and sufficient conditions under which secure flow guarantees can be enforced without exposing the complexities and details of the component information flow models.
The framework consists of (1) a way to connect security classes of one organisation to those in another while satisfying intuitive requirements;  (2) a simple language that extends  the operations within an organisation with primitives for transferring data between organisations; and (3) a type system and operational model for these constructs, which we use to establish that the framework conserves security.

In \S\ref{sec:Bi-DirFlow}, we first identify, using intuitive examples,
violations in secure flow that may arise when two secure systems are permitted to exchange information in both directions.
Based on these lacunae, we formulate \textit{security} and \textit{precision} requirements for secure bidirectional flow.
We then propose a framework that guarantees the absence of such policy violations, without impinging on the autonomy of the individual systems,  without the need for re-verifying the security of the application procedures in either of the domains, and confining the analysis to only the transfer classes involved in potential exchange of data.
Our approach is based on \textit{monotone functions} and an elegant theory of \textit{connections} \cite{MELTON1994lagoisconnections} between the security lattices.
Theorem \ref{thm:secureconnection} shows that\textit{ Lagois connections} between the security lattices satisfy the security and precision requirements.

We present, in \S\ref{sec:model}, a minimal operational language consisting of a small set of \textit{atomic primitives} to effect the transfer of data between domains.
The framework is simple and can be adapted for establishing secure connections between distributed systems at any level of abstraction (language, system, database, ...).
We assume each domain uses \textit{atomic transactional operations} for object manipulation and intra-domain computation.
The primitives of our model include reliable communication between two systems, transferring object data in designated \textit{output} variables of one domain to designated \textit{input} variables of a specified security class in the other domain.
We also assume a generic set of operations in each domain for copying data from input variables to domain objects, and from domain objects to output variables.
To avoid interference between inter-domain communication and the computations within the domains, we assume that the sets of designated input and output variables are all mutually exclusive of one another, and also with the program/system variables used in the computations within each domain.
Thus by design we avoid the usual suspects that cause interference and insecure transfer of data.
The operational description of the language consists of the primitives
together with their execution rules (\S\ref{sec:operations}).

The correctness of our framework is demonstrated by expressing soundness (with respect to the operational semantics) of a type system (\S\ref{sec:typing}), stated in terms of the security lattices and their connecting functions.
In particular, Theorem \ref{thm:soundness} shows the standard semantic property of \textit{non-interference} in both domains holds of all operational behaviours.
We adapt and \textit{extend} the approach taken by Volpano \textit{et al} \cite{DBLP:journals/jcs/VolpanoIS96} to encompass systems coupled using the Lagois connection conditions (and assuming atomicity of the data transfer operations), to show that security is conserved.
We believe that our formulation is general enough to be applicable to other behavioural notions of secure information flow as a safety property \cite{Boudol2008SIFasSafety}.

In \S\ref{sec:related}, we briefly review some related work.
We conclude in \S\ref{sec:conclusion} with a discussion on our approach and directions for future work.

\section{Lagois Connections and All That}\label{sec:Bi-DirFlow}

\paragraph{Motivating Examples.} \ \
Consider a university system in which students study in semi-autonomously administered colleges (one such is $\ld{C}$) that are affiliated to a university ($\rid{U}$). The  university also has ``university professors'' with whom students can take classes.
We assume each institution has established the security of its information flow mechanisms and policies.

We first observe that formulating an agreement \textit{between} the institutions that respects the flow policies within the institutions is not entirely trivial.
Consider an arrangement where the college \textit{Faculty} and \textit{University Faculty} can share information (say, course material and examinations), and the \textit{Dean of Colleges} in the University can exchange information (e.g., students' official grade-sheets) with the college's \textit{Dean of Students}.
Even such an apparently reasonable arrangement suffers from insecurities, as illustrated in Fig. \ref{fig:Bi-breach-equiv-classes} by the flow depicted using dashed red arrows, where information can flow from the college's \textit{Faculty} to the college's \textit{Dean of Students}. (Moral: internal structure of the lattices matters.)
\begin{figure}[t]
    \begin{minipage}[t]{.45\textwidth}
    \begin{tikzpicture}[framed,->,node distance=0.9cm,on grid]
    \title{W and D}
    \node(T2)   {\scriptsize$\top 2$};
    \node(T1)  [xshift=-3cm]  {\scriptsize$\top 1$};
    \node(Dir1) [below of = T1] {\scriptsize$CollegePrincipal$};
    \node(D1) [below left of = Dir1] {\scriptsize$Dean\ (F)$};
    \node(F1) [below of = D1] {\scriptsize$Faculty$};
    \node(DS1) [xshift=-2.3cm,yshift=-1.7cm] {\scriptsize$Dean\ (S)$};
    \node(S1) [below right of = F1] {\scriptsize$Student$};
    \node(B1)  [below of=S1]  {\scriptsize$\bot 1$};
    \node(Sec2)  [below of = T2] {\scriptsize$Chancellor$};
    \node(AS2)  [below of=Sec2] {\scriptsize$Vice\ Chancellor$};
    \node(Dir2)  [below of=AS2] {\scriptsize$Dean(Colleges)$};
    \node(E2)  [below of=Dir2] {\scriptsize$Univ.Fac.$};
    \node(B2)  [below of=E2]  {\scriptsize$\bot 2$};
    \draw [Fuchsia, ->] (Dir1) to (T1);
    \draw [Fuchsia, ->] (D1) to (Dir1);
    \draw [Fuchsia, ->] (F1) to (D1);
    \draw [Fuchsia, ->] (S1) to (F1);
    \draw [Fuchsia, ->] (S1) to (DS1);
    \draw [Fuchsia, ->] (DS1) to (Dir1);
    \draw [Fuchsia, ->] (B1) to (S1);
    \draw [Fuchsia, ->] (B2) to (E2);
    \draw [Fuchsia, ->] (E2) to (Dir2);
    \draw [Fuchsia, ->] (Dir2) to (AS2);
    \draw [Fuchsia, ->] (AS2) to (Sec2);
    \draw [Fuchsia, ->] (Sec2) to (T2);
    \draw [OliveGreen, thick, <->](F1) [bend right = 5] to (E2);
    \draw [OliveGreen, thick, <->](DS1) to (Dir2);
    \draw [OliveGreen, thick, ->](Dir1) [bend left = 10] to (AS2);
    \draw [red,  thick, dashed] (F1) [bend left = 7] to (E2);
    \draw [red,  thick, dashed] (E2) [bend left = 35] to (Dir2);
    \draw [red,  thick, dashed] (Dir2) [bend left = 15] to (DS1);
    \draw (-3,-2.2) ellipse (1.4cm and 2.4cm);
    \draw (0,-2.2)[red] ellipse (1.2cm and 2.5cm);
    \end{tikzpicture}
    \caption{\small Green arrows represent permitted flows according to the information exchange arrangement between a college and a university. Red dashed arrows highlight a \textit{new} flow that is a security violation. \label{fig:Bi-breach-equiv-classes}}
    \end{minipage}
    \quad \quad
    \begin{minipage}[t]{.45\textwidth}
    \begin{tikzpicture}[framed,->,node distance=0.9cm,on grid]
    \title{W and D}
    \node(T2)   {\scriptsize$\top 2$};
    \node(T1)  [xshift=-3cm]  {\scriptsize$\top 1$};
    \node(Dir1) [below of = T1] {\scriptsize$CollegePrincipal$};
    \node(D1) [below left of = Dir1] {\scriptsize$Dean\ (F)$};
    \node(F1) [below of = D1] {\scriptsize$Faculty$};
    \node(DS1) [xshift=-2.3cm,yshift=-1.7cm] {\scriptsize$Dean\ (S)$};
    \node(S1) [below right of = F1] {\scriptsize$Student$};
    \node(B1)  [below of=S1]  {\scriptsize$\bot 1$};
    \node(Sec2)  [below of = T2] {\scriptsize$Chancellor$};
    \node(AS2)  [below of=Sec2] {\scriptsize$Vice\ Chancellor$};
    \node(Dir2)  [below of=AS2] {\scriptsize$Dean(Colleges)$};
    \node(E2)  [below of=Dir2] {\scriptsize$Univ.Fac.$};
    \node(B2)  [below of=E2]  {\scriptsize$\bot 2$};
    \draw [blue,  thick] (S1) to (E2);
    \draw [green,  thick] (F1) to (E2);
    \draw [green,  thick] (B1) to (B2);
    \draw [green,  thick] (T1) to (T2);
    \draw [blue,  thick] (Dir1) [bend left = 10] to (Dir2);
    \draw [green,  thick] (D1) [bend right = 15] to (Dir2);
    \draw [green,  thick] (DS1) to (Dir2);
    \draw [Fuchsia, ->] (Dir1) to (T1);
    \draw [Fuchsia, ->] (D1) to (Dir1);
    \draw [Fuchsia, ->] (F1) to (D1);
    \draw [Fuchsia, ->] (S1) to (F1);
    \draw [Fuchsia, ->] (S1) to (DS1);
    \draw [Fuchsia, ->] (DS1) to (Dir1);
    \draw [Fuchsia, ->] (B1) to (S1);
    \draw [Fuchsia, ->] (B2) to (E2);
    \draw [Fuchsia, ->] (E2) to (Dir2);
    \draw [Fuchsia, ->] (Dir2) to (AS2);
    \draw [Fuchsia, ->] (AS2) to (Sec2);
    \draw [Fuchsia, ->] (Sec2) to (T2);
    \draw (-3,-2.2) ellipse (1.4cm and 2.4cm);
    \draw (0,-2.2)[red] ellipse (1.2cm and 2.5cm);
    \end{tikzpicture}
    \caption{\small Unidirectional flow: If the blue arrows denote identified flows connecting important classes, then the green arrows are constrained by monotonicity to lie between them. \label{fig:WnD}}
    \end{minipage}

    \label{fig:first-combined}
\end{figure}
%

As long as information flows \textit{unidirectionally} from colleges to the University, \textit{monotone functions} from the security classes of the college lattice $\ld{C}$ to university security lattice $\rid{U}$ \textit{suffice} to ensure secure information flow.
A function $\rid{\alpha}: \ld{C} \rightarrow \rid{U}$ is called \textit{monotone} if whenever $\ld{sc_1} ~\ld{\sqsubseteq}~ \ld{sc_2}$ in $\ld{C}$ then $\rid{\alpha}(\ld{sc_1}) ~\rd{\sqsubseteq}~ \rid{\alpha}(\ld{sc_2})$ in $\rid{U}$.\footnote{Note that it's not necessary to make the function $\rid{\alpha}$ total or onto.}
Monotonicity also constrains possible flows between classes of the two domains, once certain important flows between certain classes have been identified (see Fig. \ref{fig:WnD}).
Moreover, since monotone functions are closed under composition, one can chain them to create secure \textit{unidirectional} information flow connections through a series of administrative domains.
Monotonicity is a basic principle adopted for information flow analyses, e.g. \cite{liu2017fabric}.

\begin{figure}[t]
    \begin{minipage}[t]{.45\textwidth}
    \centering
    \begin{tikzpicture}[framed,->,node distance=0.9cm,on grid]
    \title{W and D}
    \node(T2)   {\scriptsize$\top 2$};
    \node(T1)  [xshift=-3cm]  {\scriptsize$\top 1$};
    \node(Dir1) [below of = T1] {\scriptsize$CollegePrincipal$};
    \node(D1) [below left of = Dir1] {\scriptsize$Dean\ (F)$};
    \node(F1) [below of = D1] {\scriptsize$Faculty$};
    \node(DS1) [xshift=-2.3cm,yshift=-1.7cm] {\scriptsize$Dean\ (S)$};
    \node(S1) [below right of = F1] {\scriptsize$Student$};
    \node(B1)  [below of=S1]  {\scriptsize$\bot 1$};
    \node(Sec2)  [below of = T2] {\scriptsize$Chancellor$};
    \node(AS2)  [below of=Sec2] {\scriptsize$Vice\ Chancellor$};
    \node(Dir2)  [below of=AS2] {\scriptsize$Dean(Colleges)$};
    \node(E2)  [below of=Dir2] {\scriptsize$Univ.Fac.$};
    \node(B2)  [below of=E2]  {\scriptsize$\bot 2$};
    \draw [blue,  thick] (S1) [bend right = 5] to (E2);
    \draw [green,  thick] (F1) [bend right = 5] to (E2);
    \draw [green,  thick] (B1) to (B2);
    \draw [green,  thick, ->] (T1) to (T2);
    \draw [brown,  thick, ->] (T2) [bend right=10] to (T1);
    \draw [blue,  thick] (Dir1) [bend left = 10] to (Dir2);
    \draw [green,  thick] (D1) to (Dir2);
    \draw [green,  thick] (DS1) to (Dir2);
    \draw [Fuchsia, ->] (Dir1) to (T1);
    \draw [Fuchsia, ->] (D1) to (Dir1);
    \draw [Fuchsia, ->] (F1) to (D1);
    \draw [Fuchsia, ->] (S1) to (F1);
    \draw [Fuchsia, ->] (S1) to (DS1);
    \draw [Fuchsia, ->] (DS1) to (Dir1);
    \draw [Fuchsia, ->] (B1) to (S1);
    \draw [Fuchsia, ->] (B2) to (E2);
    \draw [Fuchsia, ->] (E2) to (Dir2);
    \draw [Fuchsia, ->] (Dir2) to (AS2);
    \draw [Fuchsia, ->] (AS2) to (Sec2);
    \draw [Fuchsia, ->] (Sec2) to (T2);
    \draw [red,  thick] (Dir2) [bend right = 10] to (F1);
    \draw [brown,  thick] (B2) [bend left=20] to (B1);
    \draw [brown,  thick] (E2) [bend right=10] to (F1);
    \draw [brown,  thick] (AS2)  to (T1);
    \draw [brown,  thick] (Sec2) to (T1);
    \draw (-3,-2.2) ellipse (1.4cm and 2.4cm);
    \draw (0,-2.2)[red] ellipse (1.2cm and 2.5cm);
    \end{tikzpicture}
    \caption{\small 
    The blue/green and brown/red arrows define monotone functions in each direction.
    However, the red arrow highlights a flow that is a security violation.
    \label{fig:Bi-WnD}}
    \end{minipage}
    \quad \quad
    \begin{minipage}[t]{.45\textwidth}
    \centering
    \begin{tikzpicture}[framed,->,node distance=0.9cm,on grid]
    \title{W and D}
    \node(T2)   {\scriptsize$\top 2$};
    \node(T1)  [xshift=-3cm]  {\scriptsize$\top 1$};
    \node(Dir1) [below of = T1] {\scriptsize$CollegePrincipal$};
    \node(D1) [below left of = Dir1] {\scriptsize$Dean\ (F)$};
    \node(F1) [below of = D1] {\scriptsize$Faculty$};
    \node(DS1) [xshift=-2.3cm,yshift=-1.7cm] {\scriptsize$Dean\ (S)$};
    \node(S1) [below right of = F1] {\scriptsize$Student$};
    \node(B1)  [below of=S1]  {\scriptsize$\bot 1$};
    \node(Sec2)  [below of = T2] {\scriptsize$Chancellor$};
    \node(AS2)  [below of=Sec2] {\scriptsize$Vice\ Chancellor$};
    \node(Dir2)  [below of=AS2] {\scriptsize$Dean(Colleges)$};
    \node(E2)  [below of=Dir2] {\scriptsize$Univ.Fac.$};
    \node(B2)  [below of=E2]  {\scriptsize$\bot 2$};
    \draw [green,  thick] (S1) [bend right = 10] to (E2);
    \draw [green,  thick] (F1) [bend right = 10] to (Dir2);
    \draw [green,  thick] (B1) to (B2);
    \draw [green,  thick, ->] (T1) [bend right=10] to (T2);
    \draw [brown,  thick, ->] (T2) [bend right=10] to (T1);
    \draw [green,  thick] (Dir1) [bend right= 3] to (Sec2);
    \draw [green,  thick] (D1) [bend right=15] to (AS2);
    \draw [green,  thick] (DS1) [bend right=10] to (AS2);
    \draw [Fuchsia, ->] (Dir1) to (T1);
    \draw [Fuchsia, ->] (D1) to (Dir1);
    \draw [Fuchsia, ->] (F1) to (D1);
    \draw [Fuchsia, ->] (S1) to (F1);
    \draw [Fuchsia, ->] (S1) to (DS1);
    \draw [Fuchsia, ->] (DS1) to (Dir1);
    \draw [Fuchsia, ->] (B1) to (S1);
    \draw [Fuchsia, ->] (B2) to (E2);
    \draw [Fuchsia, ->] (E2) to (Dir2);
    \draw [Fuchsia, ->] (Dir2) to (AS2);
    \draw [Fuchsia, ->] (AS2) to (Sec2);
    \draw [Fuchsia, ->] (Sec2) to (T2);
    \draw [brown,  thick] (B2) [bend left=10] to (S1);
    \draw [brown,  thick] (Sec2) to (T1);
    \draw [brown,  thick] (AS2) to (Dir1);
    \draw [brown,  thick] (E2) [bend left=10] to (F1);
    \draw [brown,  thick] (Dir2) [bend left=10] to (D1);
    \draw (-3,-2.2) ellipse (1.4cm and 2.4cm);
    \draw (0,-2.2)[red] ellipse (1.2cm and 2.5cm);
    \end{tikzpicture}
    \caption{\small 
    The arrows define a secure and precise connection.
    However, the security classification escalates quickly in a few round-trips, when information can flow in both directions.
    \label{fig:Bi-WnD-yCC}}
    \end{minipage}
\end{figure}

However, when there is ``blowback'' of information, mere monotonicity is \textit{inadequate} for ensuring SIF.
Consider the bidirectional flow situation in Fig. \ref{fig:Bi-WnD}, where data return to the original domain.
Monotonicity of both functions $\rid{\alpha}: \ld{C} \rightarrow \rid{U}$ and $\ld{\gamma}: \rid{U} \rightarrow \ld{C}$ does \textit{not} suffice for security because the composition $\ld{\gamma} \circ \rid{\alpha}$ may \textit{not} be non-decreasing.
In Fig. \ref{fig:Bi-WnD}, both $\rid{\alpha}$ and $\ld{\gamma}$ are monotone but their composition
can lead to information leaking from a higher class, e.g., \textit{College Principal}, to a lower class, e.g., \textit{Faculty} within $\ld{C}$ --- an outright violation of the college's security policy.
Similarly, composition $\rid{\alpha} \circ \ld{\gamma}$ may lead to violation of the University's security policy.
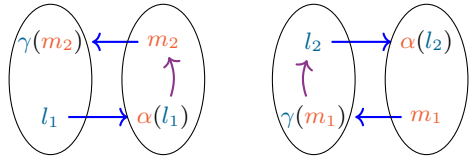
\begin{figure*}[htb]
\centering
\begin{tikzpicture}[->,node distance=1cm,on grid]
\title{Sec cond. 1}
\node(T2)   {$\rid{m_2}$};
\node(T1)  [xshift=-1.5cm]  {$\ld{\gamma}(\rid{m_2})$};
\node(B1)  [below of=T1]  {$\ld{l_1}$};
\node(B2)  [below of=T2]  {$\rid{\alpha}(\ld{l_1})$};

\draw [blue,  thick] (B1) to (B2);
\draw [Fuchsia, thick] (B2) to [bend right=20] (T2);
\draw [blue,  thick] (T2) to (T1);

\draw (-1.5,-0.5) ellipse (0.55cm and 1cm);
\draw (0,-0.5) ellipse (0.55cm and 1cm);

\node(T3)  [xshift=2cm]  {$\ld{l_2}$};
\node(T4)  [xshift=3.5cm]  {$\rid{\alpha}(\ld{l_2})$};
\node(B3)  [below of=T3]  {$\ld{\gamma}(\rid{m_1})$};
\node(B4)  [below of=T4]  {$\rid{m_1}$};
\draw [blue,  thick] (B4) to (B3);
\draw [Fuchsia, thick] (B3) to [bend left=20] (T3);
\draw [blue,  thick] (T3) to (T4);
\draw (2,-0.5) ellipse (0.55cm and 1cm);
\draw (3.5,-0.5) ellipse (0.55cm and 1cm);
\end{tikzpicture}
\caption{\small Secure flow conditions: (\textbf{sc1}) $\ld{l_1} \ld{\sqsubseteq} \ld{\gamma}(\rid{m_2})$ ~~(\textbf{sc2}) $\rid{m_1} \rd{\sqsubseteq} \rid{\alpha}(\ld{l_2})$.\label{fig:sec11}}
\end{figure*}

\paragraph{Requirements.} \ \
We want to ensure that any ``round-trip'' flow of information, e.g., from a domain $\ld{L}$ to $\rid{M}$ and back to $\ld{L}$, is a permitted flow in the lattice $\ld{L}$, from where the data originated.
Thus we require the following (tersely stated) ``security conditions'' \textbf{SC1} and \textbf{SC2} on $\rid{\alpha}: \ld{L} \rightarrow \rid{M}$ and $\ld{\gamma}: \rid{M} \rightarrow \ld{L}$, which preclude any violation of the security policies of both the administrative domains
(see Fig. \ref{fig:sec11}):
\[
\textbf{SC1}~~ \lambda \ld{l}.\ld{l} ~\ld{\sqsubseteq}~
\ld{\gamma} \circ \rid{\alpha}
~~~~~~\hfill~~~~~~
\textbf{SC2} ~~  \lambda \rid{m}.\rid{m} ~ \rid{\sqsubseteq}~
\rid{\alpha} \circ \ld{\gamma}
\]
We also desire \textit{precision}, based on a principle of least privilege escalation ---
if data are exchanged between the two domains without any computation done on them, then the security level should not be needlessly raised.
Precision is important for meaningful and useful analyses.
\[
\begin{array}{c}
\textbf{PC1}~~\rid{\alpha}(\ld{l_1}) = \rid{\bigsqcup} ~ \{\rid{m_1} ~|~ \ld{\gamma}(\rid{m_1}) = \ld{l_1} \}, \; \; \forall \ld{l_1} \in \ld{\gamma}[\rid{M}]\\
\textbf{PC2} ~~\ld{\gamma}(\rid{m_1}) = \ld{\bigsqcup} ~ \{\ld{l_1} ~|~ \rid{\alpha}(\ld{l_1}) = \rid{m_1}\}, \; \; \forall \rid{m_1} \in \rid{\alpha}[\ld{L}]
\end{array}
\]
Further, if the data were to go back and forth between two domains more than once, the security classes to which data belong should not become increasingly restrictive after consecutive bidirectional data sharing (See Fig. \ref{fig:Bi-WnD-yCC}, which shows monotone functions that keep climbing up to the top).
This convergence requirement may be stated informally as conditions \textbf{CC1} and \textbf{CC2}, requiring  \textit{fixed points} for the compositions $\ld{\gamma} \circ \rid{\alpha}$ and $\rid{\alpha} \circ \ld{\gamma}$.
Since security lattices are finite, \textbf{CC1} and \textbf{CC2} necessarily hold -- such fixed points exist, though perhaps only at the topmost elements of the lattice.
We would therefore desire a stronger requirement, where fixed points are reached
as low in the orderings as possible. \\

\vspace{-0.37cm}
\begin{figure}
    \begin{minipage}[t]{.45\textwidth}
    \centering
    \begin{tikzpicture}[framed,->,node distance=1cm,on grid]
    \title{W and D}
    \node(T2)   {\scriptsize$\top 2$};
    \node(T1)  [xshift=-3cm]  {\scriptsize$\top 1$};
    \node(Dir1) [below of = T1] {\scriptsize$CollegePrincipal$};
    \node(D1) [below left of = Dir1] {\scriptsize$Dean\ (F)$};
    \node(F1) [below of = D1] {\scriptsize$Faculty$};
    \node(DS1) [xshift=-2.3cm,yshift=-1.7cm] {\scriptsize$Dean\ (S)$};
    \node(S1) [below right of = F1] {\scriptsize$Student$};
    \node(B1)  [below of=S1]  {\scriptsize$\bot 1$};
    \node(Sec2)  [below of = T2] {\scriptsize$Chancellor$};
    \node(AS2)  [below of=Sec2] {\scriptsize$Vice\ Chancellor$};
    \node(Dir2)  [below of=AS2] {\scriptsize$Dean(Colleges)$};
    \node(E2)  [below of=Dir2] {\scriptsize$Univ.Fac.$};
    \node(B2)  [below of=E2]  {\scriptsize$\bot 2$};
    \draw [green,  thick] (S1) [bend left = 5] to (Dir2);
    \draw [red,  thick] (Dir2) [bend left = 5] to (S1);
    \draw [red,  thick] (F1) to (Dir2);
    \draw [green,  thick] (B1) to (B2);
    \draw [green,  thick, ->] (T1) [bend right=10] to (T2);
    \draw [brown,  thick, ->] (T2) [bend right=10] to (T1);
    \draw [green,  thick] (Dir1) [bend left= 10] to (AS2);
    \draw [red,  thick] (D1) to (Dir2);
    \draw [red,  thick] (DS1) to (Dir2);
    \draw [Fuchsia, ->] (Dir1) to (T1);
    \draw [Fuchsia, ->] (D1) to (Dir1);
    \draw [Fuchsia, ->] (F1) to (D1);
    \draw [Fuchsia, ->] (S1) to (F1);
    \draw [Fuchsia, ->] (S1) to (DS1);
    \draw [Fuchsia, ->] (DS1) to (Dir1);
    \draw [Fuchsia, ->] (B1) to (S1);
    \draw [Fuchsia, ->] (B2) to (E2);
    \draw [Fuchsia, ->] (E2) to (Dir2);
    \draw [Fuchsia, ->] (Dir2) to (AS2);
    \draw [Fuchsia, ->] (AS2) to (Sec2);
    \draw [Fuchsia, ->] (Sec2) to (T2);
    \draw [brown,  thick] (B2) [bend left=20] to (B1);
    \draw [brown,  thick] (Sec2) to (T1);
    \draw [brown,  thick] (AS2) to (Dir1);
    \draw [brown,  thick] (E2) [bend left=10] to (S1);

    \draw (-3,-2.5) ellipse (1.4cm and 2.7cm);
    \draw (0,-2.5)[red] ellipse (1.2cm and 2.7cm);
    \end{tikzpicture}
    \caption{\small 
    The arrows define a Galois Connection.
    However, the red arrows highlight flow security violations when information can flow in both directions.
    \label{fig:Bi-WnD-yNotGC}}
    \end{minipage}
    \quad \quad
    \begin{minipage}[t]{.45\textwidth}
    \centering
    \begin{tikzpicture}[framed,->,node distance=1cm,on grid]
    \title{W and D}
    \node(T2)   {{\color{blue}\scriptsize$\top 2$}};
    \node(T1)  [xshift=-3cm]  {{\color{blue} \scriptsize$\top 1$}};
    \node(Dir1) [below of = T1] {{\color{blue}\scriptsize$CollegePrincipal$}};
    \node(D1) [below left of = Dir1] {\scriptsize$Dean\ (F)$};
    \node(F1) [below of = D1] {\scriptsize$Faculty$};
    \node(DS1) [xshift=-2.3cm,yshift=-1.7cm] {\scriptsize$Dean\ (S)$};
    \node(S1) [below right of = F1] {{\color{blue}\scriptsize$Student$}};
    \node(B1)  [below of=S1]  {{\color{blue}\scriptsize$\bot 1$}};
    \node(Sec2)  [below of = T2] {\scriptsize$Chancellor$};
    \node[align=center] (AS2)  [below of=Sec2] {\scriptsize $Vice$ \\ \scriptsize $Chancellor$};
    \node(Dir2)  [below of=AS2] {{\color{blue}\scriptsize$Dean(Colleges)$}};
    \node(E2)  [below of=Dir2] {{\color{blue}\scriptsize$Univ.Fac.$}};
    \node(B2)  [below of=E2]  {{\color{blue}\scriptsize$\bot 2$}};
    \draw [black,  thick, <->] (F1) to (E2);
    \draw [OliveGreen, ->] (S1) to (E2);
    \draw [black,  thick,<->] (B1) to (B2);
    \draw [black,  thick, <->] (T1) to (T2);
    \draw [black,  thick, <->] (Dir1) [bend left = 10] to (Dir2);
    \draw [OliveGreen] (D1) to (Dir2);
    \draw [OliveGreen] (DS1) to (Dir2);
    \draw [OliveGreen] (AS2) to (T1);
    \draw [OliveGreen] (Sec2) to (T1);
    \draw [Fuchsia, ->] (Dir1) to (T1);
    \draw [Fuchsia, ->] (D1) to (Dir1);
    \draw [Fuchsia, ->] (F1) to (D1);
    \draw [Fuchsia, ->] (S1) to (F1);
    \draw [Fuchsia, ->] (S1) to (DS1);
    \draw [Fuchsia, ->] (DS1) to (Dir1);
    \draw [Fuchsia, ->] (B1) to (S1);
    \draw [Fuchsia, ->] (B2) to (E2);
    \draw [Fuchsia, ->] (E2) to (Dir2);
    \draw [Fuchsia, ->] (Dir2) to (AS2);
    \draw [Fuchsia, ->] (AS2) to (Sec2);
    \draw [Fuchsia, ->] (Sec2) to (T2);

    \draw (-3,-2.5) ellipse (1.4cm and 2.7cm);
    \draw (0,-2.5)[red] ellipse (1.2cm and 2.7cm);

    \draw [dotted] (0,-1.4) ellipse (0.8cm and 1.2cm);
    \draw [dotted] (-3,-1.7) ellipse (1.35cm and 0.3cm);
    \draw [dotted] (-3,-3.2) ellipse (0.5cm and 0.55cm);

    \end{tikzpicture}
    \caption{\small A useful increasing Lagois connection for sharing data. 
    Black arrows define permissible flows between buds. \label{fig:Bi-L-WnD}} 
    \end{minipage}
\end{figure}

\paragraph{Galois connections aren't the answer.} \ \
Any discussion on a pair of partial orders linked by a pair of monotone functions suggests the notion of a Galois connection, an elegant and ubiquitous mathematical structure that finds use in computing, particularly in static analyses.
However, Galois connections are not the appropriate structure for bidirectional informational flow control.

Let $\ld{L}$ and $\rid{M}$ be two complete security class lattices, and $\rid{\alpha}: \ld{L} \rightarrow \rid{M}$ and $\ld{\gamma}: \rid{M} \rightarrow \ld{L}$ be two monotone functions such that $(\ld{L}, \rid{\alpha}, \ld{\gamma}, \rid{M})$ forms a Galois connection.
Recall that a Galois connection satisfies the condition
\[
    \textbf{GC1}~~
    \forall \ld{l_1} \in \ld{L}, \rid{m_1} \in \rid{M},\ \ \ \rid{\alpha}(\ld{l_1}) ~\rd{\sqsubseteq}~ \rid{m_1} ~\iff~
    \ld{l_1} ~\ld{\sqsubseteq}~ \ld{\gamma}(\rid{m_1})
\]
So in a Galois connection we have
$\rid{\alpha}(\ld{\gamma}(\rid{m_1})) ~ \rd{\sqsubseteq}~ \rid{m_1} \iff \ld{\gamma}(\rid{m_1}) ~\ld{\sqsubseteq}~ \ld{\gamma}(\rid{m_1})$.
Since $\ld{\gamma}(\rid{m_1}) ~\ld{\sqsubseteq}~ \ld{\gamma}(\rid{m_1})$ holds trivially, we get $\rid{\alpha}(\ld{\gamma}(\rid{m_1}))~ \rd{\sqsubseteq}~ \rid{m_1}$.
If $\rid{\alpha}(\ld{\gamma}(\rid{m_1})) ~\rd{\neq}~ \rid{m_1}$ then $ \rid{\alpha}(\ld{\gamma}(\rid{m_1})) ~\rd{\sqsubset}~ \rid{m_1}$ (strictly), which would violate secure flow requirement \textbf{SC2}.  Fig. \ref{fig:Bi-WnD-yNotGC} illustrates such a situation.

\paragraph{Why not Galois insertions?}
Now suppose $\ld{L}$ and $\rid{M}$ are two complete security class lattices, and $\rid{\alpha}: \ld{L} \rightarrow \rid{M}$ and $\ld{\gamma}: \rid{M} \rightarrow \ld{L}$ be two monotone functions such that $(\ld{L}, \rid{\alpha}, \ld{\gamma}, \rid{M})$ forms a \textit{Galois insertion}.
Then the flow of information permitted by $\rid{\alpha}$ and  $\ld{\gamma}$ is guaranteed to be secure.
However, Galois insertions mandate conditions on the definitions of functions $\rid{\alpha}$ and $\ld{\gamma}$ that are much too strong, i.e.,
\begin{itemize}
\item $\ld{\gamma}: \rid{M} \rightarrow \ld{L}$ is \textit{injective}, i.e.,
$\forall \rid{m_1,m_2} \in \rid{M} : \ld{\gamma}(\rid{m_1}) ~\ld{=}~ \ld{\gamma}(\rid{m_2}) \implies \rid{m_1} = \rid{m_2}$
\item $\rid{\alpha}: \ld{L} \rightarrow \rid{M}$ is \textit{surjective}, i.e.,
    $\forall \rid{m_1} \in \rid{M}, \exists \ld{l_1} \in \ld{L}:  \rid{\alpha}(\ld{l_1}) = \rid{m_1}$.
\end{itemize}
Typically data are shared only from a few security classes of any organisation.
Organisations rarely make public their entire security class structure and permitted flow policies.
Organisations also typically do not want any external influences on some subsets of its security classes.
Thus, if not all elements of $\rid{M}$ are transfer classes, it may be impossible to define a Galois insertion
$(\ld{L}, \rid{\alpha}, \ld{\gamma}, \rid{M})$ because we cannot force $\rid{\alpha}$ to be surjective.

\paragraph{Lagois Connections.} \ \
Further, the connection we seek to make between two domains should allow us to transpose them. 
Fortunately there is an elegant structure, i.e.,  \textit{Lagois Connections} \cite{MELTON1994lagoisconnections}, which exactly satisfies this as well as the requirements of security and bidirectional sharing (\textbf{SC1, SC2, PC1, PC2, CC1} and \textbf{CC2}).
They also conveniently generalise Galois Insertions.

\begin{definition}[Lagois Connection \cite{MELTON1994lagoisconnections}]
If $L = (\ld{L},\ld{\sqsubseteq})$ and $M = (\rid{M},\rd{\sqsubseteq})$ are two partially ordered sets, and $\rid{\alpha}: \ld{L} \rightarrow \rid{M}$ and $\ld{\gamma}: \rid{M} \rightarrow \ld{L}$ are order-preserving functions, then we call the quadruple $(\ld{L}, \rid{\alpha}, \ld{\gamma}, \rid{M})$ an {\em increasing} Lagois connection, if it satisfies the following properties:
\[
\begin{array}{llcll}
\textbf{LC1}~~ & \lambda \ld{l}.\ld{l} ~\ld{\sqsubseteq}~
\ld{\gamma} \circ \rid{\alpha}
& ~~~~~~~~~~~ &
\textbf{LC2}~~ & \lambda \rd{m}.\rd{m} ~\rd{\sqsubseteq}~
\rid{\alpha} \circ \ld{\gamma} \\
\textbf{LC3}~~ &  \rid{\alpha} \circ  \ld{\gamma}  \circ \rid{\alpha} = \rid{\alpha}
& ~~~~~~~~~~~ &
\textbf{LC4}~~ & \ld{\gamma}  \circ \rid{\alpha}  \circ \ld{\gamma} = \ld{\gamma}
\end{array}
\]
\end{definition}

\textbf{LC3} ensures that $\ld{\gamma}(\rid{\alpha}(\ld{c_1}))$ is the least upper bound of all security classes in $\ld{C}$ that are mapped to the same security class, say $\rid{u_1} = \rid{\alpha}(\ld{c_1})$ in $\rid{U}$.

The main result of this section is that  if the negotiated monotone functions $\rid{\alpha}$ and $\ld{\gamma}$ form a Lagois connection between the security lattices $\ld{L}$ and $\rid{M}$, then information flows permitted are secure and precise.

\begin{theorem}\label{thm:secureconnection}
Let $\ld{L}$ and $\rid{M}$ be two complete security class lattices, $\rid{\alpha}: \ld{L} \rightarrow \rid{M}$ and $\ld{\gamma}: \rid{M} \rightarrow \ld{L}$ be two monotone functions.
Then the flow of information permitted by $\rid{\alpha}$, $\ld{\gamma}$ satisfies conditions \textbf{SC1}, \textbf{SC2}, \textbf{PC1}, \textbf{PC2}, \textbf{CC1} and \textbf{CC2}
if $(\ld{L}, \rid{\alpha}, \ld{\gamma}, \rid{M})$ is an increasing Lagois connection.
\end{theorem}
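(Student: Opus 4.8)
The plan is to dispatch each of the six conditions by direct appeal to the four Lagois laws, exploiting the left/right symmetry of the axioms so that only one side of each pair needs genuine work. First I would observe that \textbf{SC1} and \textbf{SC2} are \emph{literally} the hypotheses \textbf{LC1} and \textbf{LC2}, so nothing remains to prove there: every round-trip $\ld{\gamma} \circ \rid{\alpha}$ lands at or above its starting class, and dually for $\rid{\alpha} \circ \ld{\gamma}$.

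For the convergence conditions \textbf{CC1} and \textbf{CC2}, I would show that the two composites are \emph{closure operators}. Each composite $\ld{\gamma} \circ \rid{\alpha}$ is monotone (being a composite of order-preserving maps) and inflationary (by \textbf{LC1}); idempotence follows from \textbf{LC3} via $(\ld{\gamma} \circ \rid{\alpha}) \circ (\ld{\gamma} \circ \rid{\alpha}) = \ld{\gamma} \circ (\rid{\alpha} \circ \ld{\gamma} \circ \rid{\alpha}) = \ld{\gamma} \circ \rid{\alpha}$, and symmetrically for $\rid{\alpha} \circ \ld{\gamma}$ using \textbf{LC4}. Consequently $(\ld{\gamma} \circ \rid{\alpha})(\ld{l})$ is a fixed point of $\ld{\gamma} \circ \rid{\alpha}$ for every $\ld{l}$, and in fact the \emph{least} fixed point lying above $\ld{l}$ --- exactly the strengthened convergence property the section asks for, so that the classification need not escalate toward the top.

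The only conditions requiring real argument are the precision conditions \textbf{PC1} and \textbf{PC2}; I would treat \textbf{PC1} and obtain \textbf{PC2} by symmetry. Fix $\ld{l_1} \in \ld{\gamma}[\rid{M}]$, say $\ld{l_1} = \ld{\gamma}(\rid{m_0})$, and set $S = \{\rid{m_1} \mid \ld{\gamma}(\rid{m_1}) = \ld{l_1}\}$. I would argue that $\rid{\alpha}(\ld{l_1})$ is not merely an upper bound of $S$ but its maximum, hence $\rid{\bigsqcup} S$. For any $\rid{m_1} \in S$, \textbf{LC2} gives $\rid{m_1} \rd{\sqsubseteq} \rid{\alpha}(\ld{\gamma}(\rid{m_1})) = \rid{\alpha}(\ld{l_1})$, so $\rid{\alpha}(\ld{l_1})$ bounds $S$ from above. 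The crucial step is membership: using \textbf{LC4},
\[
\ld{\gamma}(\rid{\alpha}(\ld{l_1})) = \ld{\gamma}(\rid{\alpha}(\ld{\gamma}(\rid{m_0}))) = \ld{\gamma}(\rid{m_0}) = \ld{l_1},
\]
so $\rid{\alpha}(\ld{l_1}) \in S$, and an upper bound that belongs to the set is its least upper bound.

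I expect this membership step --- recognising that the three-fold identity \textbf{LC4} (resp.\ \textbf{LC3} for \textbf{PC2}) is precisely what promotes the upper bound $\rid{\alpha}(\ld{l_1})$ into the \emph{maximum} of $S$ --- to be the one genuinely load-bearing observation. It is here that the Lagois axioms do more than a bare Galois connection would: the equalities \textbf{LC3}/\textbf{LC4}, rather than mere inequalities, are what force the least-privilege behaviour. Everything else reduces to routine bookkeeping in the two lattices.
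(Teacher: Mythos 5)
Your proposal is correct and draws on exactly the same ingredients as the paper's proof --- the four Lagois laws --- but it is considerably more self-contained: where the paper simply cites Proposition 3.7 of Melton et al.\ for \textbf{PC1}/\textbf{PC2} and asserts without argument that the two composites are closure operators for \textbf{CC1}/\textbf{CC2}, you supply both proofs, and your max-of-the-fibre argument for \textbf{PC1} (upper bound of $S$ by \textbf{LC2}, membership of $\rid{\alpha}(\ld{l_1})$ in $S$ by \textbf{LC4}, hence maximum, hence join) is exactly right, as is the closure-operator computation $(\ld{\gamma}\circ\rid{\alpha})\circ(\ld{\gamma}\circ\rid{\alpha})=\ld{\gamma}\circ(\rid{\alpha}\circ\ld{\gamma}\circ\rid{\alpha})=\ld{\gamma}\circ\rid{\alpha}$. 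The one point of substantive divergence is \textbf{SC1}/\textbf{SC2}. You read \textbf{SC1} literally as the displayed inequality $\lambda \ld{l}.\ld{l} ~\ld{\sqsubseteq}~ \ld{\gamma}\circ\rid{\alpha}$, which is indeed verbatim \textbf{LC1}, so your ``nothing remains to prove'' is defensible against the text. However, the condition the paper actually verifies --- and the one depicted in Fig.~\ref{fig:sec11} --- is the stronger round-trip property through an \emph{arbitrary} intermediate class: if $\rid{\alpha}(\ld{l_1}) ~\rd{\sqsubseteq}~ \rid{m_2}$ is a permitted flow inside $\rid{M}$, then $\ld{l_1} ~\ld{\sqsubseteq}~ \ld{\gamma}(\rid{m_2})$. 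That version is not literally \textbf{LC1}; it needs one further line, namely monotonicity of $\ld{\gamma}$ applied to $\rid{\alpha}(\ld{l_1}) ~\rd{\sqsubseteq}~ \rid{m_2}$ chained with \textbf{LC1}, which is precisely what the paper's proof writes out. Your argument as stated covers only the special case $\rid{m_2}=\rid{\alpha}(\ld{l_1})$; adding that one line (and its dual for \textbf{SC2}) closes the gap. Everything else in your write-up matches or strengthens the paper's treatment.
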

\begin{proof}
Condition \textbf{SC1}  holds because
if $\rid{\alpha}(\ld{l_1}) ~\rd{\sqsubseteq}~ \rid{m_2}$,
by monotonicity of $\ld{\gamma}$,
$\ld{\gamma}(\rid{\alpha}(\ld{l_1})) ~\ld{\sqsubseteq}~ \ld{\gamma}(\rid{m_2})$.
But  by \textbf{LC1},
$\ld{l_1} ~\ld{\sqsubseteq}~ \ld{\gamma} (\rid{\alpha}(\ld{l_1}))$.
So $\ld{l_1} ~\ld{\sqsubseteq}~\ld{\gamma}(\rid{m_2})$.
(A symmetric argument holds for \textbf{SC2}.)
Conditions \textbf{PC1} and \textbf{PC2} are shown in Proposition 3.7 of  \cite{MELTON1994lagoisconnections}.
Conditions \textbf{CC1} and \textbf{CC2} hold since the compositions $\ld{\gamma} \circ \rid{\alpha}$ and $\rid{\alpha} \circ \ld{\gamma}$ are \textit{closure} operators, i.e., idempotent, extensive, order-preserving endo-functions on $\ld{L}$ and $\rid{M}$.
\end{proof}
In fact, Lagois connections ensure that information in a security class in the original domain remains accessible even after doing a round-trip from the other domain (Proposition 3.8 in \cite{MELTON1994lagoisconnections}):
\begin{align}
    \ld{\gamma}(\rid{\alpha}(\ld{l})) ~=~
    \ld{\sqcap} \{ \ld{l^*} \in \ld{\gamma}[\rid{M}] ~|~
    \ld{l} ~\ld{\sqsubseteq}~ \ld{l^*} \}, \label{TIGHT1} \\
     \rid{\alpha}( \ld{\gamma}(\rid{m})) ~=~
     \rid{\sqcap} \{ \rid{m^*} \in \rid{\alpha}[\ld{L}] ~|~
        \rid{m} ~\rd{\sqsubseteq}~ \rid{m^*} \}.
        \label{TIGHT2}
\end{align}
We list various propositions about Lagois connections, which illustrate some of their important properties.
In particular, the two functions $\ld{\gamma}$ and $\rid{\alpha}$ uniquely determine each other.  Moreover, there are largest members of their pre-images, which act as representatives for the equivalence classes of the equivalence relations $\rid{\thicksim_M}$ and $\ld{\thicksim_L}$ induced by these functions.
Further, since our security domains are complete lattices, these distinguished points are closed under meets.
\begin{proposition}[Proposition 3.7 in \cite{MELTON1994lagoisconnections}]
Let $(\ld{L}, \rid{\alpha}, \ld{\gamma}, \rid{M})$ be a Lagois connection and let $\rid{m} \in \rid{\alpha}[\ld{L}]$ and $\ld{l} \in \ld{\gamma}[\rid{M}]$. The $\rid{\alpha}^{-1}(\rid{m})$ has a largest member, which is $\ld{\gamma}(\rid{m})$, and $\ld{\gamma}^{-1}(\ld{l})$ has a largest member, which is $\rid{\alpha}(\ld{l})$.
\end{proposition}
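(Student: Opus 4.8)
The plan is to prove the two assertions by one argument applied twice. The Lagois conditions are symmetric under interchanging $(\ld{L},\rid{\alpha})$ with $(\rid{M},\ld{\gamma})$ (swapping \textbf{LC1} with \textbf{LC2} and \textbf{LC3} with \textbf{LC4}), so it suffices to show in full that $\ld{\gamma}(\rid{m})$ is the largest member of $\rid{\alpha}^{-1}(\rid{m})$ for $\rid{m} \in \rid{\alpha}[\ld{L}]$; the statement about $\ld{\gamma}^{-1}(\ld{l})$ then follows verbatim with the roles of the two functions and the two orderings exchanged. For the chosen claim I would verify two things: that $\ld{\gamma}(\rid{m})$ lies in $\rid{\alpha}^{-1}(\rid{m})$, and that it dominates every other element of that preimage.

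For membership, I would use the hypothesis $\rid{m} \in \rid{\alpha}[\ld{L}]$ to fix a witness $\ld{l_0}$ with $\rid{\alpha}(\ld{l_0}) = \rid{m}$, and then compute $\rid{\alpha}(\ld{\gamma}(\rid{m})) = \rid{\alpha}(\ld{\gamma}(\rid{\alpha}(\ld{l_0}))) = \rid{\alpha}(\ld{l_0}) = \rid{m}$, the middle step being precisely \textbf{LC3}; hence $\ld{\gamma}(\rid{m}) \in \rid{\alpha}^{-1}(\rid{m})$. For the upper-bound property, I would take an arbitrary $\ld{l}$ with $\rid{\alpha}(\ld{l}) = \rid{m}$ and apply \textbf{LC1} to it, giving $\ld{l} ~\ld{\sqsubseteq}~ \ld{\gamma}(\rid{\alpha}(\ld{l})) = \ld{\gamma}(\rid{m})$. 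Together these two facts say exactly that $\ld{\gamma}(\rid{m})$ is the largest member of the preimage. The dual claim uses \textbf{LC4} for membership and \textbf{LC2} for the bound.

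I expect no genuine obstacle; the only point that needs care is that membership depends essentially on the standing hypothesis $\rid{m} \in \rid{\alpha}[\ld{L}]$. Without a preimage witness one cannot invoke \textbf{LC3}, and for an arbitrary $\rid{m} \in \rid{M}$ one recovers only $\rid{m} ~\rd{\sqsubseteq}~ \rid{\alpha}(\ld{\gamma}(\rid{m}))$ from \textbf{LC2}, not the equality required for $\ld{\gamma}(\rid{m})$ to belong to the preimage. It is worth observing that the whole argument rests on \textbf{LC1} and \textbf{LC3} (and dually \textbf{LC2}, \textbf{LC4}) together with the image hypothesis, and does not separately invoke the order-preservation of $\rid{\alpha}$ and $\ld{\gamma}$.
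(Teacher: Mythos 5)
Your proof is correct: membership of $\ld{\gamma}(\rid{m})$ in $\rid{\alpha}^{-1}(\rid{m})$ via \textbf{LC3} applied to a witness $\ld{l_0}$ with $\rid{\alpha}(\ld{l_0})=\rid{m}$, the upper-bound property via \textbf{LC1}, and the dual claim by the symmetry of the axioms are exactly the standard argument. The paper itself gives no proof here --- it imports the statement verbatim as Proposition 3.7 of Melton, Schr\"{o}der and Strecker --- so there is nothing to diverge from; your remark that the image hypothesis $\rid{m}\in\rid{\alpha}[\ld{L}]$ is essential (since \textbf{LC2} alone yields only $\rid{m}~\rd{\sqsubseteq}~\rid{\alpha}(\ld{\gamma}(\rid{m}))$ for arbitrary $\rid{m}$) is accurate and worth keeping.
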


This means that for all $\rid{m} \in \rid{\alpha}[\ld{L}]$ and $\ld{l} \in \ld{\gamma}[\rid{M}]$, $\ld{\gamma}(\rid{m})$ and $\rid{\alpha}(\ld{l})$ exist, and are defined by largest members in their pre-image. Also, the images $\ld{\gamma}[\rid{M}]$ and $\rid{\alpha}[\ld{L}]$ are isomorphic lattices.

$\rid{M^*} = \rid{\alpha}[\ld{\gamma}[\rid{M}]] = \rid{\alpha}[\ld{L}]$ defines a system of representatives for $\rid{\thicksim_M}$. Then, $\rid{\alpha}(\ld{\gamma}(\rd{m}))$ is the representative of the equivalence class $[\rd{m}]$ of $\rd{m}$ that lies in $\rid{M^*}$, called \textit{budpoint}, such that,
\begin{align}
    \textit{if}~ \rid{m}\in \rid{M} ~\textit{and}~ \rid{m^*} \in \rid{M^*} ~\textit{with}~ \rid{m} ~\rid{\thicksim_M}~ \rid{m^*}~\textit{then}~ \rid{m} ~\rd{\sqsubseteq}~ \rid{m^*}
\end{align}
Symmetrically, $\ld{L^*} = \ld{\gamma}[\rid{\alpha}[\ld{L}]] = \ld{\gamma}[\rid{M}]$ defines a system of representatives for $\ld{\thicksim_L}$.
\begin{proposition}[Proposition 3.9 in \cite{MELTON1994lagoisconnections}]
If $(\ld{L}, \rid{\alpha}, \ld{\gamma}, \rid{M})$ is a Lagois connection, then the functions $\rid{\alpha}$ and
$\ld{\gamma}$ uniquely determine each other; in fact
\begin{align}
    \ld{\gamma}(\rid{m}) ~=~ \ld{\bigsqcup}~ \rid{\alpha}^{-1}~[~\rid{\sqcap} \{~\rid{m^*} \in \rid{\alpha}[\ld{L}]~|~ \rid{m} ~\rd{\sqsubseteq}~ \rid{m^*}\} ~]\\
    \rid{\alpha}(\ld{l}) ~=~
    \rid{\bigsqcup}~ \ld{\gamma}^{-1} [~\ld{\sqcap} \{~ \ld{l^*} \in \ld{\gamma}~[~\rid{M}] ~|~
    \ld{l} ~\ld{\sqsubseteq}~ \ld{l^*} \} ~]
\end{align}
\end{proposition}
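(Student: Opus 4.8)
The plan is to prove the first identity, the one for $\ld{\gamma}$, and to obtain the second for $\rid{\alpha}$ by the evident left--right symmetry of the Lagois axioms (using \eqref{TIGHT1} and \textbf{LC3} in place of \eqref{TIGHT2} and \textbf{LC4}). The strategy is not to evaluate the right-hand side from scratch, but to recognise it as a composite of three facts already established: the tightness identity \eqref{TIGHT2}, the preimage structure from the preceding proposition (Proposition 3.7 of \cite{MELTON1994lagoisconnections}), and axiom \textbf{LC4}.

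First I would abbreviate $\rid{\widehat{m}} = \rid{\sqcap}\{\rid{m^*} \in \rid{\alpha}[\ld{L}] \mid \rid{m} \rd{\sqsubseteq} \rid{m^*}\}$, the inner meet on the right-hand side. By the tightness identity \eqref{TIGHT2} this is exactly $\rid{\alpha}(\ld{\gamma}(\rid{m}))$, so in particular $\rid{\widehat{m}} \in \rid{\alpha}[\ld{L}]$, since it is visibly a value of $\rid{\alpha}$; this membership is what licenses the next step. As $\rid{\widehat{m}} \in \rid{\alpha}[\ld{L}]$, the preceding proposition tells us that the preimage $\rid{\alpha}^{-1}(\rid{\widehat{m}})$ has a \emph{largest} element, namely $\ld{\gamma}(\rid{\widehat{m}})$; hence the join of that preimage collapses onto its maximum, giving $\ld{\bigsqcup}\,\rid{\alpha}^{-1}[\rid{\widehat{m}}] = \ld{\gamma}(\rid{\widehat{m}})$. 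Finally, unfolding $\rid{\widehat{m}} = \rid{\alpha}(\ld{\gamma}(\rid{m}))$ and applying \textbf{LC4} (that $\ld{\gamma} \circ \rid{\alpha} \circ \ld{\gamma} = \ld{\gamma}$) yields $\ld{\gamma}(\rid{\widehat{m}}) = \ld{\gamma}(\rid{\alpha}(\ld{\gamma}(\rid{m}))) = \ld{\gamma}(\rid{m})$, which chains the equalities into the claimed formula.

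The ``uniquely determine each other'' clause then comes for free: the derived formula expresses $\ld{\gamma}$ purely in terms of $\rid{\alpha}$, its image $\rid{\alpha}[\ld{L}]$, and the lattice operations, so $\rid{\alpha}$ fixes $\ld{\gamma}$, and the symmetric identity shows $\ld{\gamma}$ fixes $\rid{\alpha}$.

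I expect the only genuinely delicate point to be the step $\ld{\bigsqcup}\,\rid{\alpha}^{-1}[\rid{\widehat{m}}] = \ld{\gamma}(\rid{\widehat{m}})$: it rests on the preimage possessing a maximum (a supremum that is attained), so the argument must first secure $\rid{\widehat{m}} \in \rid{\alpha}[\ld{L}]$ before the preceding proposition can be invoked. This is precisely where tightness \eqref{TIGHT2} does the real work, and it is also where completeness of the lattices matters, guaranteeing that the meet defining $\rid{\widehat{m}}$ exists and lands inside $\rid{\alpha}[\ld{L}]$. Everything else is routine bookkeeping with \textbf{LC3}/\textbf{LC4} and monotonicity.
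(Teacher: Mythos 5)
Your proof is correct; note that the paper itself states this proposition only by citation to Melton et al.\ and gives no proof of its own, so there is no in-paper argument to diverge from. Your derivation --- identifying the inner meet with $\rid{\alpha}(\ld{\gamma}(\rid{m}))$ via \eqref{TIGHT2}, collapsing the join of the preimage onto its largest element $\ld{\gamma}(\rid{\widehat{m}})$ via the preceding proposition, and finishing with \textbf{LC4} (symmetrically, \eqref{TIGHT1} and \textbf{LC3}) --- is sound and rests only on facts the paper already records, and the uniqueness clause does indeed follow immediately from the two displayed formulas.
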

\begin{proposition}[Proposition 3.11 in \cite{MELTON1994lagoisconnections}]
\label{prop:meet-existence}
If $(\ld{L}, \rid{\alpha}, \ld{\gamma}, \rid{M})$ is a Lagois connection and $\ld{A} \subseteq \ld{\gamma}[\rid{M}]$, then
\begin{enumerate}
    \item the meet of $\ld{A}$ in $\ld{\gamma}[\rid{M}]$ exists if and only if the meet of $\ld{A}$ in $\ld{L}$ exists, and whenever either exists, they are equal.
    \item the join $\ld{\hat{a}}$ of $\ld{A}$ in $\ld{\gamma}[\rid{M}]$ exists if the join $\ld{\check{a}}$ of $\ld{A}$ in $\ld{L}$ exists, and in this case $\ld{\hat{a}} = \ld{\gamma}(\rid{\alpha}(\ld{\check{a}}))$
\end{enumerate}
\end{proposition}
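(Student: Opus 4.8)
The plan is to recognise the subposet $\ld{\gamma}[\rid{M}]$ as exactly the set of fixed points of the closure operator $\ld{c} = \ld{\gamma}\circ\rid{\alpha}$ on $\ld{L}$, and then to run the standard argument relating (co)meets of a closure system to those of the ambient lattice. The proof of Theorem~\ref{thm:secureconnection} already records that $\ld{c}$ is extensive (\textbf{LC1}), order-preserving, and idempotent, and these are the only properties of $\ld{c}$ I will use. First I would prove the characterisation $\ld{\gamma}[\rid{M}] = \{\,\ld{l} \mid \ld{c}(\ld{l}) = \ld{l}\,\}$: the inclusion $\supseteq$ is immediate, and for $\subseteq$, if $\ld{l} = \ld{\gamma}(\rid{m})$ then \textbf{LC4} gives $\ld{c}(\ld{l}) = \ld{\gamma}(\rid{\alpha}(\ld{\gamma}(\rid{m}))) = \ld{\gamma}(\rid{m}) = \ld{l}$. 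In particular every $\ld{a} \in \ld{A}$ is a fixed point, $\ld{c}(\ld{a}) = \ld{a}$.

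For the meet claim, suppose first that the meet $\ld{p}$ of $\ld{A}$ in $\ld{L}$ exists. Since $\ld{p} \ld{\sqsubseteq} \ld{a}$ for each $\ld{a} \in \ld{A}$, monotonicity and $\ld{c}(\ld{a}) = \ld{a}$ give $\ld{c}(\ld{p}) \ld{\sqsubseteq} \ld{c}(\ld{a}) = \ld{a}$, so $\ld{c}(\ld{p})$ is itself a lower bound of $\ld{A}$; as $\ld{p}$ is the greatest such, $\ld{c}(\ld{p}) \ld{\sqsubseteq} \ld{p}$, and with extensiveness $\ld{p} \ld{\sqsubseteq} \ld{c}(\ld{p})$ we get $\ld{p} \in \ld{\gamma}[\rid{M}]$; being the greatest lower bound in all of $\ld{L}$, it is a fortiori the greatest inside the subposet. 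Conversely, if the meet $\ld{q}$ of $\ld{A}$ in $\ld{\gamma}[\rid{M}]$ exists, I would show it is already the meet in $\ld{L}$: for any lower bound $\ld{r}$ of $\ld{A}$ in $\ld{L}$, the element $\ld{c}(\ld{r})$ lies in $\ld{\gamma}[\rid{M}]$ and (as above) is a lower bound of $\ld{A}$, hence $\ld{c}(\ld{r}) \ld{\sqsubseteq} \ld{q}$, and extensiveness yields $\ld{r} \ld{\sqsubseteq} \ld{c}(\ld{r}) \ld{\sqsubseteq} \ld{q}$; since $\ld{q}$ is also a lower bound, it is the greatest in $\ld{L}$.

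For the join claim, I would assume the join $\ld{\check{a}}$ of $\ld{A}$ in $\ld{L}$ exists and set $\ld{\hat{a}} = \ld{c}(\ld{\check{a}}) = \ld{\gamma}(\rid{\alpha}(\ld{\check{a}}))$. Idempotency places $\ld{\hat{a}}$ in $\ld{\gamma}[\rid{M}]$; for each $\ld{a} \in \ld{A}$ we have $\ld{a} = \ld{c}(\ld{a}) \ld{\sqsubseteq} \ld{c}(\ld{\check{a}}) = \ld{\hat{a}}$, so $\ld{\hat{a}}$ is an upper bound of $\ld{A}$ in $\ld{\gamma}[\rid{M}]$; and if $\ld{u} \in \ld{\gamma}[\rid{M}]$ is any upper bound then $\ld{\check{a}} \ld{\sqsubseteq} \ld{u}$ in $\ld{L}$, whence $\ld{\hat{a}} = \ld{c}(\ld{\check{a}}) \ld{\sqsubseteq} \ld{c}(\ld{u}) = \ld{u}$. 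Thus $\ld{\hat{a}}$ is the least upper bound in $\ld{\gamma}[\rid{M}]$, which gives both the existence and the stated formula $\ld{\hat{a}} = \ld{\gamma}(\rid{\alpha}(\ld{\check{a}}))$.

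The only step that is not purely mechanical is the converse direction of part~1, where I must recover the ambient meet from one known to exist merely inside $\ld{\gamma}[\rid{M}]$; the device is to push an arbitrary $\ld{L}$-lower bound into the closure system by applying $\ld{c}$ and sandwich it below $\ld{q}$. I would also remark that the join statement is deliberately one-directional for general posets, and that under this paper's standing assumption that $\ld{L}$ and $\rid{M}$ are complete lattices all the meets and joins in question exist unconditionally, so the biconditional and the conditional both reduce to the displayed equalities.
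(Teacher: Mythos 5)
Your proof is correct. Note that the paper itself gives no proof of this proposition: it is imported verbatim as Proposition 3.11 of the cited Melton--Schr\"oder--Strecker paper, so there is nothing in the text to diverge from. Your route --- identify $\ld{\gamma}[\rid{M}]$ with the fixed points of the closure operator $\ld{\gamma}\circ\rid{\alpha}$ via \textbf{LC4}, then run the standard closure-system argument for meets (push an ambient lower bound into the subposet with $\ld{c}$ and sandwich) and for joins (close off the ambient join) --- is exactly the canonical argument, and it is also the one the cited source uses; every step checks out, including the only delicate point you flag, namely recovering the ambient meet from the subposet meet in the converse of part~1. Your closing remark is also apt: since the paper's security domains are complete lattices, the existence hypotheses are automatic and only the displayed identifications carry content.
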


\section{An Operational Model}\label{sec:model}

\subsection{Computational Model.}\label{sec:operations}
Let us consider two different organisations $\ld{L}$ and $\rid{M}$ that want to share data with each other.
We start with the assumptions that the two domains comprise storage objects $\ld{Z}$ and $\rd{Z}$ respectively, which are manipulated using their own sets of \textit{atomic} transactional operations, ranged over by $\ld{t}$ and $\rd{t}$ respectively.
We further assume that these transactions within each domain are internally secure with respect to their flow models, and have no insecure or interfering interactions with the environment.
Thus, we are agnostic to the level of abstraction of the systems we aim to connect securely, and since our approach treats the application domains as ``black boxes'', it is readily adaptable to any level of discourse (language, system, OS, database) found in the security literature.

We extend these operations with a minimal set of operations to transfer data between the two domains.
To avoid any concurrency effects, interference or race conditions arising from inter-domain transfer, we augment the storage objects of both domains with a fresh set of \textit{export} and \textit{import} variables into/from which the data of the domain objects can be copied \textit{atomically}.
We designate these sets $\ld{X}, \rd{X}$ as the respective \textit{export} variables, and $\ld{Y}, \rd{Y}$ as the respective \textit{import} variables, with the corresponding variable instances written as $\ld{x_i}$, $\rd{x_i}$ and $\ld{y_i}$, $\rd{y_i}$.
These export and import variables form mutually disjoint sets, and are distinct from any extant domain objects manipulated by the applications within a domain.
These variables are used exclusively for transfer, and are manipulated atomically.
We let $\ld{w_i}$ range over all variables in $\ld{N} ~=~\ld{Z} \cup \ld{X} \cup \ld{Y}$ (respectively $\rd{w_i}$ over $\rd{N} ~=~ \rd{Z} \cup \rd{X} \cup \rd{Y}$).
Domain objects are copied \textit{to export} variables and \textit{from import} variables by special operations $\ld{rd}(\ld{z}, \ld{y})$ and $\ld{wr}(\ld{x}, \ld{z})$ (and $\rd{rd}(\rd{z}, \rd{y})$ and  $\rd{wr}(\rd{x}, \rd{z})$ in the other domain).
We assume \textit{atomic transfer} operations (\textit{trusted by both domains}) $T_{RL}, T_{LR}$ that copy data from the export variables of one domain to the import variables of the other domain as the only mechanism for inter-domain flow of data.
Let  ``phrase'' $p$ denote a command in either domain or a transfer operation, and let $s$ be any (empty or non-empty) sequence of phrases.

\[
\begin{array}{c}
\text{(command)} ~~~
    \ld{c} ::=  \ld{t} ~|~ \ld{rd}(\ld{z}, \ld{y}) ~|~  \ld{wr}(\ld{x}, \ld{z}) ~~~~~~
    \rd{c} ::=
    \rd{t} ~|~ \rd{rd}(\rd{z}, \rd{y}) ~|~  \rd{wr}(\rd{x}, \rd{z}) \\
    \text{(phrase)} ~~~ p ::= T_{RL}(\rd{x},\ld{y}) ~|~ T_{LR}(\ld{x},\rd{y}) ~|~
    \ld{c} ~|~ \rd{c}  ~~~\hfill~~~  \text{(seq)}~~~ s ::= \epsilon ~|~ s_1 ; p \\
\end{array}
\]

\begin{figure}
\[
\begin{array}{c}
\inferrule* [Left = \ld{T}]{\ld{\mu} \vdash \ld{t} \Rightarrow \ld{\nu}
}{\langle \ld{\mu},\rd{\mu} \rangle \vdash \ld{t} \Rightarrow \langle \ld{\nu}, \rd{\mu} \rangle}
~~~~\hfill~~~~
\inferrule* [Left = \rd{T}]{\rd{\mu} \vdash \rd{t} \Rightarrow \rd{\nu}
}{\langle \ld{\mu},\rd{\mu} \rangle \vdash \rd{t} \Rightarrow \langle \ld{\mu}, \rd{\nu} \rangle} \\[1ex]
\inferrule*[Left = \ld{Wr}]{
}{\langle \ld{\mu},\rd{\mu} \rangle \vdash \ld{wr}(\ld{x},\ld{z}) \Rightarrow \langle \ld{\mu}[\ld{x} := \ld{\mu}(\ld{z})], \rd{\mu} \rangle}
\\
\inferrule*[Left = \rd{Wr}]{
}{\langle \ld{\mu},\rd{\mu} \rangle \vdash \rd{wr}(\rd{x},\rd{z}) \Rightarrow \langle \ld{\mu}, \rd{\mu}[\rd{x} := \rd{\mu}(\rd{z})] \rangle}
\\[1ex]
\inferrule*[Left = \ld{Rd}]{
}{\langle \ld{\mu},\rd{\mu} \rangle \vdash \ld{rd}(\ld{z},\ld{y}) \Rightarrow \langle \ld{\mu}[\ld{z} := \ld{\mu}(\ld{y})], \rd{\mu} \rangle} \\
\inferrule*[Left = \rd{Rd}]{
}{\langle \ld{\mu},\rd{\mu} \rangle \vdash \rd{rd}(\rd{z},\rd{y}) \Rightarrow \langle \ld{\mu}, \rd{\mu}[\rd{z} := \rd{\mu}(\rd{y})] \rangle} \\[1ex]
\inferrule*[Left = Trl]{
}{\langle \ld{\mu},\rd{\mu} \rangle \vdash T_{RL}(\ld{y},\rd{x}) \Rightarrow
\langle \ld{\mu}[\ld{y} := \rd{\mu}(\rd{x})],\rd{\mu} \rangle} \\
\inferrule*[Left = Tlr]{
}{\langle \ld{\mu},\rd{\mu} \rangle \vdash T_{LR}(\rd{y},\ld{x}) \Rightarrow
\langle \ld{\mu},\rd{\mu}[\rd{y} := \ld{\mu}(\ld{x})] \rangle} \\[1ex]
\inferrule* [Left = Seq0]{}{
\langle \ld{\mu},\rd{\mu} \rangle \vdash \epsilon \Rightarrow^* \langle \ld{\mu},\rd{\mu} \rangle} \\
\inferrule* [Left = Seqs]{\langle \ld{\mu},\rd{\mu} \rangle \vdash s_1 \Rightarrow^* \langle \ld{\mu_1},\rd{\mu_1} \rangle, ~~~
\langle \ld{\mu_1},\rd{\mu_1} \rangle \vdash p \Rightarrow  \langle \ld{\mu_2},\rd{\mu_2} \rangle}{\langle \ld{\mu},\rd{\mu} \rangle \vdash s_1 ; p \Rightarrow^* \langle \ld{\mu_2},\rd{\mu_2} \rangle}
\end{array}
\]
\caption{Execution Rules}
    \label{fig:evalrules11}
\end{figure}

A \textit{store} (typically $\ld{\mu}, \ld{\nu}, \rd{\mu}, \rd{\nu}$) is a finite-domain function from variables to a set of values (not further specified).
We write, e.g., $\ld{\mu}(\ld{w})$ for the contents of the store $\ld{\mu}$ at variable $\ld{w}$, and $\ld{\mu}[\ld{w} := \rd{\mu}(\rd{w})]$ for the store
that is the same as $\ld{\mu}$ everywhere except at variable $\ld{w}$,
where it now takes value $\rd{\mu}(\rd{w})$.

The rules specifying execution of commands are given in Fig. \ref{fig:evalrules11}.
Assuming the specification of intradomain transactions of the form
$\ld{\mu} \vdash \ld{t} \implies \ld{\nu}$ and $\rd{\mu} \vdash \rd{t} \implies \rd{\nu}$, our rules allow us to specify judgments of the form
$\langle \ld{\mu}, \rd{\mu} \rangle \vdash p \implies \langle \ld{\nu}, \rd{\nu} \rangle$ for phrases, and (the reflexive-transitive closure) for sequences of phrases.
Note that phrase execution occurs \textit{atomically}, and the intra-domain transactions, as well as copying to and from the export/import variables affect the store in only one domain, whereas the \textit{atomic transfer} is only between export variables of one domain and the import variables of the other.

\subsection{Typing Rules}\label{sec:typing}
Let the two domains have the respective different IFMs:
\[ FM_L = \langle \ld{N}, \ld{P}, \ld{SC}, \ld{\sqcup}, \ld{\sqsubseteq} \rangle  ~~~\hfill~~~ FM_M = \langle \rd{N}, \rd{P}, \rd{SC}, \rid{\sqcup}, \rd{\sqsubseteq} \rangle, \]
such that the flow policies in both are defined over different sets of security classes $\ld{SC}$ and $\rd{SC}$.\footnote{Without loss of generality, we assume that $\ld{SC} \cap \rd{SC} = \emptyset$, since we can suitably rename security classes.}

The (security) types of the core language are as follows.

Metavariables $\ld{l}$ and $\rd{m}$ range over the sets of security classes, $\ld{SC}$ and $\rd{SC}$ respectively, which are partially ordered by $\ld{\sqsubseteq}$ and $\rd{\sqsubseteq}$.
A type assignment $\ld{\lambda}$ is a finite-domain function from variables $\ld{N}$ to $\ld{SC}$ (respectively, $\rd{\lambda}$ from $\rd{N}$ to $\rd{SC}$).
The important restriction we place on $\ld{\lambda}$
and $\rd{\lambda}$ is that they map export and import variables $\ld{X}, \rid{X}, \rd{Y}, \rid{Y}$ only to points in the security lattices $\ld{SC}$ and $\rd{SC}$ respectively which are in the domains of $\ld{\gamma}$ and $\rid{\alpha}$, i.e., these points participate in the Lagois connection.
Intuitively, a variable $w$ mapped to security class $\ld{l}$ can store information of security class $\ld{l}$ or lower.
The type system works with respect to given type assignment.  Given the security level, e.g., $\ld{l}$, the typing rules track for each command \textit{within that domain} whether all written-to variables in that domain are of security classes ``above'' $\ld{l}$, and additionally for transactions within a domain, they ensure  ``simple security'', i.e., that  all variables which may have been read belong to security classes ``below'' $\ld{l}$.
We assume for the transactions within a domain, e.g., $\ld{L}$, we have a type system that will give us judgments of the form $\ld{\lambda} \vdash \ld{c}: \ld{l}$.
The novel extension of our approach is to extend this framework to work over two connected domains, i.e., given implicit security levels of the contexts in the respective domains.
Cross-domain transfers will require pairing such judgments, and thus our type systems will have judgments of the form
\[ \langle \ld{\lambda}, \rd{\lambda} \rangle \vdash p: \langle
\ld{l}, \rd{m} \rangle \]

We introduce a syntax-directed set of typing rules for the core language, given in Fig. \ref{fig:syntax-type-rules11}.
In many of the rules, the type for one of the domains is not constrained by the rule, and so any suitable type may be chosen as determined by the context,  e.g., $\rd{m}$ in the rules \ld{\sc Tt},
\ld{\sc Trd}, \ld{\sc Twr} and $\ld{TT_{RL}}$,
and both $\ld{l}$ and $\rd{m}$ in {\sc Com0}.

\begin{figure}[t]
\[
\begin{array}{l}
\inferrule*[Left = \ld{Tt}]{}{
\langle \ld{\lambda}, \rd{\lambda} \rangle \vdash \ld{t}: \langle \ld{l}, \rd{m} \rangle ~\text{if for all $\ld{z}$ assigned in $\ld{t}$, }
\ld{l} ~\ld{\sqsubseteq}~ \ld{\lambda}(\ld{z})  \\
~~~~~~~~~~~~~~~~~~~~\hfill
\text{ \& for all } \ld{z_1} \text{ read in $\ld{t}$, }
\ld{\lambda}(\ld{z_1}) ~\ld{\sqsubseteq}~ \ld{l}}
\\
\inferrule*[Left = \rd{Tt}]{}{
\langle \ld{\lambda}, \rd{\lambda} \rangle \vdash \rd{t}: \langle \ld{l}, \rd{m} \rangle ~\text{if for all $\rd{z}$ assigned in $\rd{t}$, }
\rd{m} ~\rd{\sqsubseteq}~ \rd{\lambda}(\rd{z}) \\
~~~~~~~~~~~~~~~~~~~~\hfill
\text{ \& for all } \rd{z_1} \text{ read in $\rd{t}$, }
\rd{\lambda}(\rd{z_1}) ~\rd{\sqsubseteq}~ \rd{m}}
\\[1ex]
\inferrule*[Left = \ld{Trd}]{\ld{\lambda}(\ld{y}) ~\ld{\sqsubseteq}~ \ld{\lambda}(\ld{z})}{
\langle \ld{\lambda}, \rd{\lambda} \rangle \vdash \ld{rd}(\ld{z},\ld{y}):
\langle \ld{\lambda}(\ld{z}), \rd{m} \rangle
} \\
\inferrule*[Left = \rd{Trd}]{\rd{\lambda}(\rd{y}) ~\rd{\sqsubseteq}~ \rd{\lambda}(\rd{z})}{
\langle \ld{\lambda}, \rd{\lambda} \rangle \vdash \rd{rd}(\rd{z},\rd{y}):
\langle \ld{l}, \rd{\lambda}(\rd{z}) \rangle
} \\[1ex]
\inferrule*[Left = \ld{Twr}]{\ld{\lambda}(\ld{z}) ~\ld{\sqsubseteq}~ \ld{\lambda}(\ld{x})}{
\langle \ld{\lambda}, \rd{\lambda} \rangle \vdash \ld{wr}(\ld{x},\ld{z}):
\langle \ld{\lambda}(\ld{x}), \rd{m} \rangle} \\
\inferrule*[Left = \rd{Twr}]{\rd{\lambda}(\rd{z}) ~\rd{\sqsubseteq}~ \rd{\lambda}(\rd{x})}{
\langle \ld{\lambda}, \rd{\lambda} \rangle \vdash \rd{wr}(\rd{x},\rd{z}):
\langle \ld{l}, \rd{\lambda}(\rd{x}) \rangle} \\[1ex]
\inferrule*[Left = \ld{$TT_{RL}$}]{
\ld{\gamma}(\rd{\lambda}(\rd{x})) ~\ld{\sqsubseteq}~ \ld{\lambda}(\ld{y})
}{
\langle \ld{\lambda}, \rd{\lambda} \rangle \vdash T_{RL}(\ld{y},\rd{x}):
\langle \ld{\lambda}(\ld{y}), \rd{\lambda}(\rd{x}) \rangle} \\
\inferrule*[Left = \rd{$TT_{LR}$}]{
\rid{\alpha}(\ld{\lambda}(\ld{x})) ~\rd{\sqsubseteq}~ \rd{\lambda}(\rd{y})
}{
\langle \ld{\lambda}, \rd{\lambda} \rangle \vdash T_{LR}(\rd{y},\ld{x}):
\langle \ld{\lambda}(\ld{x}), \rd{\lambda}(\rd{y}) \rangle }\\[1ex]
\inferrule*[Left = Com0]{}{
\langle \ld{\lambda}, \rd{\lambda} \rangle \vdash \epsilon: \langle \ld{l}, \rd{m} \rangle} \\
\inferrule*[Left = ComP]{
\langle \ld{\lambda}, \rd{\lambda} \rangle \vdash p: \langle \ld{l_1}, \rd{m_1} \rangle
~~~~~\langle \ld{\lambda}, \rd{\lambda} \rangle \vdash s: \langle \ld{l}, \rd{m} \rangle
}{\langle \ld{\lambda}, \rd{\lambda} \rangle \vdash s; p:
\langle \ld{l_1} \ld{\sqcap} \ld{l} , \rd{m_1} \rid{\sqcap} \rd{m} \rangle}
\end{array}
    \]
    \caption{Typing rules}
    \label{fig:syntax-type-rules11}
\end{figure}

For transactions e.g., $\ld{t}$ entirely within domain $\ld{L}$, the typing rule \ld{\sc Tt} constrains the type in the left domain to be at a level $\ld{l}$ that dominates all variables read in
$\ld{t}$, and which is dominated by all variables written to in $\ld{t}$, but places no constraints on the type $\rd{m}$ in the other domain $\rid{M}$.
In the rule \ld{\sc Trd}, since a value in import variable $\ld{y}$ is copied to the variable $\ld{z}$, we have $\ld{\lambda}(\ld{y}) ~\ld{\sqsubseteq}~ \ld{\lambda}(\ld{z})$, and the type in the domain $\ld{L}$ is $\ld{\lambda}(\ld{z})$ with no constraint on the type $\rd{m}$ in the other domain.
Conversely, in the rule \ld{\sc Twr}, since a value in variable $\ld{z}$ is copied to the export variable $\ld{x}$, we have $\ld{\lambda}(\ld{z}) ~\ld{\sqsubseteq}~ \ld{\lambda}(\ld{x})$, and the type in the domain $\ld{L}$ is $\ld{\lambda}(\ld{x})$ with no constraint on the type $\rd{m}$ in the other domain.
In the rule $\ld{TT_{RL}}$, since the contents of
a variable $\rd{x}$ in domain $\rid{M}$ are copied into a variable $\ld{y}$ in domain $\ld{L}$, we require $\ld{\gamma}(\rd{\lambda}(\rd{x})) ~\ld{\sqsubseteq}~ \ld{\lambda}(\ld{y})$, and constrain the type in domain $\ld{L}$ to $\ld{\lambda}(\ld{y})$.
The constraint in the other domain is unimportant (but for the sake of convenience, we peg it at $\rd{\lambda}(\rd{x})$).
Finally, for the types of sequences of phrases, we take the meets of the collected types in each domain respectively, so that we can guarantee that no variable of type lower than these meets has been written into during the sequence.
Note that Proposition \ref{prop:meet-existence} ensures that these types have the desired properties for participating in the Lagois connection.

\subsection{Soundness} \label{sec:soundness}
We now establish soundness of our scheme by showing a non-interference theorem with respect to operational semantics and the type systems built on the security lattices.
This theorem may be viewed as a conservative adaptation (to a minimal secure data transfer framework in a Lagois-connected pair of domains) of the main result of Volpano \textit{et al} \cite{DBLP:journals/jcs/VolpanoIS96}.

We assume that underlying base transactional languages in each of the domains have the following simple property (stated for $\ld{L}$, but an analogous property is assumed for $\rid{M}$):
Within each transaction $\ld{t}$, for each assignment of an expression $\ld{e}$ to any variable $\ld{z}$,  the following holds:
If $\ld{\mu}$, $\ld{\nu}$ are two stores such that
for all $\ld{w} \in vars(\ld{e})$, we have $\ld{\mu}(\ld{w}) = \ld{\nu}(\ld{w})$, then after executing the assignment, we will get $\ld{\mu}(\ld{z}) = \ld{\nu}(\ld{z})$.
That is, if two stores are equal for all variables appearing in the expression $\ld{e}$, then the value assigned to the variable $\ld{z}$ will be the same.
This assumption plays the r\^{o}le of ``Simple Security'' of expressions in \cite{DBLP:journals/jcs/VolpanoIS96} in the proof of the main theorem.
The type system plays the r\^{o}le of ``Confinement''.
We start with two obvious lemmas about the operational semantics, namely preservation of domains, and a ``frame'' lemma:
\begin{lemma}[Domain preservation]\label{lemma:equaldomainoneval}
If $\langle \ld{\mu}, \rd{\mu} \rangle \vdash s \Rightarrow^* \langle \ld{\mu_1}, \rd{\mu_1} \rangle$, then $dom(\ld{\mu}) = dom(\ld{\mu_1})$, and $dom(\rd{\mu}) = dom(\rd{\mu_1})$.
\end{lemma}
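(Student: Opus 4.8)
The plan is to argue by rule induction on the derivation of the multi-step judgment $\langle \ld{\mu}, \rd{\mu} \rangle \vdash s \Rightarrow^* \langle \ld{\mu_1}, \rd{\mu_1} \rangle$, i.e.\ on the two rules \textsc{Seq0} and \textsc{Seqs} that define the reflexive-transitive closure (equivalently, an induction on the length of the phrase sequence $s$). The whole proof rests on one structural observation about the execution rules of Fig.~\ref{fig:evalrules11}: no rule ever enlarges or shrinks the domain of either store. Each rule either returns a store untouched or rewrites it using the store-update operation $\ld{\mu}[\ld{w} := v]$, which, by the definition given for stores, produces a function agreeing with $\ld{\mu}$ at every variable except $\ld{w}$ and hence has exactly the same domain---here we use that every variable manipulated lies in the fixed, disjoint sets $\ld{N}$ and $\rd{N}$ over which the stores are defined, so $\ld{w}$ is already present in the domain being updated.

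First I would settle the base case \textsc{Seq0}: when $s = \epsilon$ the two stores are returned verbatim, so both domain equalities hold trivially. For the inductive step \textsc{Seqs} with $s = s_1 ; p$, the rule runs $s_1$ to reach an intermediate pair $\langle \ld{\mu_1}, \rd{\mu_1} \rangle$ and then runs the single phrase $p$ to reach $\langle \ld{\mu_2}, \rd{\mu_2} \rangle$. The induction hypothesis applied to the sub-derivation for $s_1$ yields $dom(\ld{\mu}) = dom(\ld{\mu_1})$ and $dom(\rd{\mu}) = dom(\rd{\mu_1})$, so it remains to prove a single-phrase claim: the one-step judgment $\langle \ld{\mu_1}, \rd{\mu_1} \rangle \vdash p \Rightarrow \langle \ld{\mu_2}, \rd{\mu_2} \rangle$ preserves both domains. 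Chaining this with the hypothesis by transitivity of equality closes the step.

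The single-phrase claim is then a routine case analysis on the last rule deriving the one-step reduction. For the copy and transfer rules---\ld{Wr}, \ld{Rd}, \textsc{Trl} and their \rid{M}-side mirror images---exactly one of the two stores is modified, and only through one application of store update; its domain is therefore preserved while the other store is passed along unchanged. The sole case that is not purely syntactic is that of an intra-domain transaction (rules \ld{T} and \rd{T}), whose effect $\ld{\mu_1} \vdash \ld{t} \Rightarrow \ld{\nu}$ is delegated to the black-box base language. For this case I would invoke the standing assumption that the underlying transactional languages are shape-preserving on stores, i.e.\ $dom(\ld{\mu_1}) = dom(\ld{\nu})$ (and symmetrically for $\rid{M}$), the opposite domain's store being carried over verbatim. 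Since this is the only ingredient not forced by the form of the store-update operation, the main---and rather mild---obstacle is merely to state explicitly this domain-preservation property of the base transactions; everything else follows mechanically from the shape of the rules.
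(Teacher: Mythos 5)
Your proof is correct and follows essentially the same route as the paper, which simply states ``by induction on the length of the derivation'' and leaves the case analysis implicit. Your explicit observation that the intra-domain transaction case requires a domain-preservation assumption on the black-box base languages is a detail the paper glosses over, but it does not change the argument.
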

\begin{proof}
By induction on the length of the derivation of $\langle \mu, \rd{\mu} \rangle \vdash s \Rightarrow^* \langle \mu_1, \rd{\mu_1} \rangle$.
\end{proof}
\begin{lemma}[Frame]\label{lemma:notassigned}
If $\langle \mu, \rd{\mu} \rangle \vdash s \Rightarrow^* \langle \mu_1, \rd{\mu_1} \rangle, w \in dom(\mu) \cup dom(\rd{\mu})$, and $w$ is \textit{not} assigned to in $s$, then $\ld{\mu}(w) = \ld{\mu_1}(w)$ and
$\rd{\mu}(w) = \rd{\mu_1}(w)$.
\end{lemma}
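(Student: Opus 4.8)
The plan is to prove the statement by induction on the length of the derivation of $\langle \ld{\mu}, \rd{\mu} \rangle \vdash s \Rightarrow^* \langle \ld{\mu_1}, \rd{\mu_1} \rangle$, equivalently on the number of phrases in $s$, following precisely the two rules {\sc Seq0} and {\sc Seqs} that generate the reflexive-transitive closure. The base case is $s = \epsilon$, handled by {\sc Seq0}: here $\ld{\mu_1} = \ld{\mu}$ and $\rd{\mu_1} = \rd{\mu}$, so both equalities hold trivially, since no variable is assigned to in the empty sequence. Throughout, Lemma~\ref{lemma:equaldomainoneval} guarantees that $w$ remains in the domain of the relevant store at each intermediate step, so every store application written below is well-defined; and the two conjuncts $\ld{\mu}(w) = \ld{\mu_1}(w)$ and $\rd{\mu}(w) = \rd{\mu_1}(w)$ are established symmetrically, according to which domain contains $w$.

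For the inductive step, suppose $s = s_1 ; p$ and the derivation ends with {\sc Seqs}, so there is an intermediate store $\langle \ld{\mu'}, \rd{\mu'} \rangle$ with $\langle \ld{\mu}, \rd{\mu} \rangle \vdash s_1 \Rightarrow^* \langle \ld{\mu'}, \rd{\mu'} \rangle$ and $\langle \ld{\mu'}, \rd{\mu'} \rangle \vdash p \Rightarrow \langle \ld{\mu_1}, \rd{\mu_1} \rangle$. Since $w$ is not assigned to in $s$, it is assigned to neither in $s_1$ nor in $p$. Applying the induction hypothesis to $s_1$ yields $\ld{\mu}(w) = \ld{\mu'}(w)$ and $\rd{\mu}(w) = \rd{\mu'}(w)$, so it suffices to establish a single-step version: if $\langle \ld{\mu'}, \rd{\mu'} \rangle \vdash p \Rightarrow \langle \ld{\mu_1}, \rd{\mu_1} \rangle$ and $w$ is not assigned to in $p$, then $\ld{\mu'}(w) = \ld{\mu_1}(w)$ and $\rd{\mu'}(w) = \rd{\mu_1}(w)$; composing the two chains of equalities then closes the case.

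This single-step claim I would prove by case analysis on the rule used to derive $\langle \ld{\mu'}, \rd{\mu'} \rangle \vdash p \Rightarrow \langle \ld{\mu_1}, \rd{\mu_1} \rangle$. For the four copy rules (\ld{\sc Wr}, \ld{\sc Rd}, \rd{\sc Wr}, \rd{\sc Rd}) and the two transfer rules ({\sc Trl}, {\sc Tlr}), the conclusion store is obtained from the premise store by a single update of the form $\ld{\mu'}[\ld{v} := \cdots]$ (respectively $\rd{\mu'}[\rd{v} := \cdots]$), where $v$ is exactly the variable ``assigned to'' in $p$, while the store of the other domain is left literally untouched. Since $w \neq v$ by hypothesis, both equalities follow immediately from the definition of the update notation $\ld{\mu}[\ld{w} := \cdots]$ together with the fact that the untouched store is unchanged.

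The one genuinely non-syntactic case, and the step I expect to be the main obstacle, is the pair of transaction rules \ld{\sc T} and \rd{\sc T}, where $p = \ld{t}$ (respectively $\rd{t}$) and the step is justified by the \emph{opaque} underlying judgment $\ld{\mu'} \vdash \ld{t} \Rightarrow \ld{\nu}$ of the base transactional language, which we treat as a black box. To discharge it I would record, as an explicit assumption on the base language alongside the ``Simple Security'' property already assumed in \S\ref{sec:soundness}, that each atomic transaction satisfies its own frame property: any variable not assigned to by $\ld{t}$ retains its value across $\ld{\mu'} \vdash \ld{t} \Rightarrow \ld{\nu}$. Granting this, and observing that rule \ld{\sc T} leaves $\rd{\mu'}$ entirely unchanged (and symmetrically for \rd{\sc T}), the transaction cases reduce to the same preservation conclusion as the explicit rules, completing the induction.
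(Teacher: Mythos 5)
Your proposal is correct and takes essentially the same route as the paper, whose entire proof of Lemma~\ref{lemma:notassigned} is the one-line ``by induction on the length of the derivation of $\langle \ld{\mu}, \rd{\mu} \rangle \vdash s \Rightarrow^* \langle \ld{\mu_1}, \rd{\mu_1} \rangle$.'' Your additional observation that the opaque transaction cases \ld{\sc T} and \rd{\sc T} require an explicit frame assumption on the base transactional language (beyond the ``Simple Security'' assumption the paper states) is a genuine and correctly identified gap in the paper's implicit reasoning, and your way of discharging it is the right one.
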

\begin{proof}
By induction on the length of the derivation of $\langle \ld{\mu}, \rd{\mu} \rangle \vdash s \Rightarrow^* \langle \ld{\mu_1}, \rd{\mu_1} \rangle$.
\end{proof}

The main result of the paper assumes an ``adversary'' that operates at a security level $\ld{l}$ in domain $\ld{L}$ and at security level $\rd{m}$ in domain $\rid{M}$.
Note however, that these two levels are interconnected by the monotone functions $\rid{\alpha}: \ld{L} \rightarrow \rid{M}$ and
$\ld{\gamma}: \rid{M} \rightarrow \ld{L}$, since these levels are connected by the ability of information at one level in one domain to flow to the other level in the other domain.
\begin{theorem}[Type Soundness]\label{thm:soundness}
Suppose $\ld{l}, \rd{m}$ are the ``adversarial'' type levels in the respective domains, which satisfy the condition $\ld{l} = \ld{\gamma}(\rd{m})$ and
$\rd{m} = \rid{\alpha}(\ld{l})$.
Let
\begin{enumerate}[label=(\alph*)]
    \item \label{assume:1}$\langle \ld{\lambda}, \rd{\lambda} \rangle \vdash s:
    \langle \ld{l_0}, \rd{m_0}\rangle$;
    \item \label{assume:2}$\langle \ld{\mu}, \rd{\mu} \rangle \vdash s \Rightarrow^* \langle \ld{\mu_f}, \rd{\mu_f} \rangle$;
    \item \label{assume:3}$\langle \ld{\nu}, \rd{\nu} \rangle \vdash s \Rightarrow^* \langle \ld{\nu_f}, \rd{\nu_f} \rangle$;
    \item \label{assume:4}$dom(\ld{\mu}) = dom(\ld{\nu}) = dom(\ld{\lambda})$ and
    $dom(\rd{\mu}) = dom(\rd{\nu}) = dom(\rd{\lambda})$;
    \item \label{assume:5} $\ld{\mu}(\ld{w}) = \ld{\nu}(\ld{w})$ for all
    $\ld{w}$ such that $\ld{\lambda}(\ld{w}) ~\ld{\sqsubseteq}~ \ld{l}$, and
    $\rd{\mu}(\rd{w}) = \rd{\nu}(\rd{w})$ for all
    $\rd{w}$ such that $\rd{\lambda}(\rd{w}) ~\rd{\sqsubseteq}~ \rd{m}$.

\end{enumerate}
Then
$\ld{\mu_f}(\ld{w}) = \ld{\nu_f}(\ld{w})$ for all
    $\ld{w}$ such that $\ld{\lambda}(\ld{w}) ~\ld{\sqsubseteq}~ \ld{l}$, and
    $\rd{\mu_f}(\rd{w}) = \rd{\nu_f}(\rd{w})$ for all
    $\rd{w}$ such that $\rd{\lambda}(\rd{w}) ~\rd{\sqsubseteq}~ \rd{m}$.
\end{theorem}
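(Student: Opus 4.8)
The plan is to prove the statement by induction on the length of the evaluation derivation of assumption~\ref{assume:2}, i.e.\ on the structure of the sequence $s$ generated by \textsc{Seq0} and \textsc{Seqs}. Since the conclusion is symmetric in the two domains, I will establish the claim for an arbitrary ``observer'' variable $\ld{w}$ of $\ld{L}$ with $\ld{\lambda}(\ld{w}) ~\ld{\sqsubseteq}~ \ld{l}$; the claim for $\rid{M}$ follows by the identical argument with the roles of $\rid{\alpha}$ and $\ld{\gamma}$ exchanged. In the base case $s = \epsilon$ both stores are unchanged and the conclusion is exactly hypothesis~\ref{assume:5}. For the inductive step write $s = s_1 ; p$: rule \textsc{Seqs} supplies intermediate stores $\langle \ld{\mu_1}, \rd{\mu_1} \rangle$, $\langle \ld{\nu_1}, \rd{\nu_1} \rangle$ reached after $s_1$; inverting \textsc{ComP} gives typings $\langle \ld{\lambda}, \rd{\lambda} \rangle \vdash s_1 : \langle \ld{l_2}, \rd{m_2} \rangle$ and $\langle \ld{\lambda}, \rd{\lambda} \rangle \vdash p : \langle \ld{l_1}, \rd{m_1} \rangle$; and the induction hypothesis applied to $s_1$ yields that $\ld{\mu_1}, \ld{\nu_1}$ agree on every low-$\ld{L}$ variable and $\rd{\mu_1}, \rd{\nu_1}$ on every low-$\rid{M}$ variable. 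It then suffices to show that executing the single phrase $p$ from low-equivalent stores preserves low-equivalence.

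For the fixed observer $\ld{w}$ I would split on whether $p$ assigns to $\ld{w}$. If it does not, the Frame Lemma~\ref{lemma:notassigned} gives $\ld{\mu_f}(\ld{w}) = \ld{\mu_1}(\ld{w})$ and $\ld{\nu_f}(\ld{w}) = \ld{\nu_1}(\ld{w})$, and the induction hypothesis closes the case. If $p$ assigns to $\ld{w}$, I analyse $p$ by cases. For an intra-domain transaction $\ld{t}$ typed by \textsc{Tt}, the constraint $\ld{l_1} ~\ld{\sqsubseteq}~ \ld{\lambda}(\ld{w}) ~\ld{\sqsubseteq}~ \ld{l}$ forces $\ld{l_1} ~\ld{\sqsubseteq}~ \ld{l}$, whence every variable read in $\ld{t}$ has level $\ld{\sqsubseteq}~\ld{l_1}~\ld{\sqsubseteq}~\ld{l}$ and so agrees in $\ld{\mu_1}, \ld{\nu_1}$; the assumed ``Simple Security'' property of the base language then forces equal values to be assigned to $\ld{w}$. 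For $\ld{rd}(\ld{z},\ld{y})$ and $\ld{wr}(\ld{x},\ld{z})$ the single typing premise ($\ld{\lambda}(\ld{y}) ~\ld{\sqsubseteq}~ \ld{\lambda}(\ld{z})$, resp.\ $\ld{\lambda}(\ld{z}) ~\ld{\sqsubseteq}~ \ld{\lambda}(\ld{x})$) together with $\ld{\lambda}(\ld{w}) ~\ld{\sqsubseteq}~ \ld{l}$ shows the source variable is itself low, hence agrees, hence the copied value agrees. The dual cases $\rd{t}, \rd{rd}, \rd{wr}$ touch only $\rid{M}$, leaving every low-$\ld{L}$ variable unchanged, so the Frame Lemma applies.

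The crux is the two transfer phrases. Consider $p = T_{RL}(\ld{y},\rd{x})$, which writes $\rd{\mu_1}(\rd{x})$ into $\ld{y}$; the interesting subcase is $\ld{w} = \ld{y}$ with $\ld{\lambda}(\ld{y}) ~\ld{\sqsubseteq}~ \ld{l}$. I must show the source $\rd{x}$ is low in $\rid{M}$, i.e.\ $\rd{\lambda}(\rd{x}) ~\rd{\sqsubseteq}~ \rd{m}$, so that the induction hypothesis for $\rid{M}$ gives $\rd{\mu_1}(\rd{x}) = \rd{\nu_1}(\rd{x})$ and hence $\ld{\mu_f}(\ld{y}) = \ld{\nu_f}(\ld{y})$. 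Here the Lagois structure does the work. First, the hypotheses $\ld{l} = \ld{\gamma}(\rd{m})$ and $\rd{m} = \rid{\alpha}(\ld{l})$ make $\ld{l}$ and $\rd{m}$ \emph{budpoints}, i.e.\ fixed points of $\ld{\gamma} \circ \rid{\alpha}$ and $\rid{\alpha} \circ \ld{\gamma}$ respectively. Second, by the restriction on type assignments the transfer variables carry levels lying in $\rid{\alpha}[\ld{L}]$ (resp.\ $\ld{\gamma}[\rid{M}]$), so $\rd{\lambda}(\rd{x})$ is itself a budpoint, which by \textbf{LC3} is fixed by $\rid{\alpha} \circ \ld{\gamma}$. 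Now from the typing premise $\ld{\gamma}(\rd{\lambda}(\rd{x})) ~\ld{\sqsubseteq}~ \ld{\lambda}(\ld{y}) ~\ld{\sqsubseteq}~ \ld{l} = \ld{\gamma}(\rd{m})$, applying the monotone $\rid{\alpha}$ and using the two budpoint identities yields $\rd{\lambda}(\rd{x}) = \rid{\alpha}(\ld{\gamma}(\rd{\lambda}(\rd{x}))) ~\rd{\sqsubseteq}~ \rid{\alpha}(\ld{\gamma}(\rd{m})) = \rd{m}$, as required. The phrase $T_{LR}(\rd{y},\ld{x})$ is symmetric, applying $\ld{\gamma}$ to the premise $\rid{\alpha}(\ld{\lambda}(\ld{x})) ~\rd{\sqsubseteq}~ \rd{\lambda}(\rd{y})$.

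I expect this transfer step to be the main obstacle: it is the only place where monotonicity alone is insufficient and one genuinely needs \emph{order-reflection} across the two lattices, which the Lagois axioms \textbf{LC1}--\textbf{LC4} supply precisely through the budpoint fixed-point identities, mirroring exactly why \textbf{SC1} and \textbf{SC2} hold in Theorem~\ref{thm:secureconnection}. Everything else is the standard Volpano--Irvine--Smith confinement-plus-simple-security bookkeeping, with the Domain-preservation Lemma~\ref{lemma:equaldomainoneval} and the Frame Lemma~\ref{lemma:notassigned} discharging the untouched-variable cases; the symmetric half of the conclusion for $\rid{M}$ then follows by interchanging the two domains.
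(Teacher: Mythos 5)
Your proposal is correct and follows essentially the same route as the paper's own proof: induction on the length of $s$, the induction hypothesis applied to the prefix $s_1$, the Frame Lemma for variables not assigned by the final phrase, the Simple Security assumption for intra-domain transactions, the typing premises for $\ld{rd}/\ld{wr}$, and, for the transfer phrases, monotonicity of $\rid{\alpha}$ together with \textbf{LC2} and the adversary fixed-point condition $\rid{\alpha}(\ld{l}) = \rd{m}$ to push $\ld{\gamma}(\rd{\lambda}(\rd{x})) ~\ld{\sqsubseteq}~ \ld{l}$ back to $\rd{\lambda}(\rd{x}) ~\rd{\sqsubseteq}~ \rd{m}$. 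The only cosmetic differences are that you case-split on whether $p$ assigns to the observed variable (which is arguably cleaner than the paper's comparison of $\ld{l}$ with $\ld{l_p}$, a dichotomy that is not exhaustive in a general lattice), and that you invoke the budpoint equality $\rd{\lambda}(\rd{x}) = \rid{\alpha}(\ld{\gamma}(\rd{\lambda}(\rd{x})))$ where the paper needs only the inequality from \textbf{LC2}.
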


\begin{proof}
By induction on the length of sequence $s$.
The base case is vacuously true.
We now consider a sequence $s_1; p$.
$\langle \ld{\mu},\rd{\mu} \rangle \vdash s_1 \Rightarrow^* \langle \ld{\mu_1},\rd{\mu_1} \rangle$ and
$\langle \ld{\mu_1},\rd{\mu_1} \rangle \vdash p \Rightarrow  \langle \ld{\mu_f},\rd{\mu_f} \rangle$
and
$\langle \ld{\nu},\rd{\nu} \rangle \vdash s_1 \Rightarrow^* \langle \ld{\nu_1},\rd{\nu_1} \rangle$ and
$\langle \ld{\nu_1},\rd{\nu_1} \rangle \vdash p \Rightarrow  \langle \ld{\nu_f},\rd{\nu_f} \rangle$
By induction hypothesis applied to $s_1$, we have
$\ld{\mu_1}(\ld{w}) = \ld{\nu_1}(\ld{w})$ for all
$\ld{w}$ such that $\ld{\lambda(\ld{w})} ~\ld{\sqsubseteq}~\ld{l}$,
and
$\rd{\mu_1}(\rd{w}) = \rd{\nu_1}(\rd{w})$ for all
$\rd{w}$ such that $\rd{\lambda}(\rd{w}) ~\rd{\sqsubseteq}~ \rd{m}$.

Let  $\langle \ld{\lambda}, \rd{\lambda} \rangle \vdash s_1: \langle \ld{l_s}, \rd{m_s} \rangle$, and $\langle \ld{\lambda}, \rd{\lambda} \rangle \vdash p: \langle \ld{l_p}, \rd{m_p} \rangle$.
We examine four cases for $p$ (the remaining cases are symmetrical). \\
\textbf{Case} $p$ is $\ld{t}$: \ \ \
Consider any $\ld{w}$ such that
$\ld{\lambda}(\ld{w}) ~\ld{\sqsubseteq}~\ld{l}$.
If $\ld{w} \in \ld{X} \cup \ld{Y}$ (i.e., it doesn't appear in $\ld{t}$), or if $\ld{w} \in \ld{Z}$ but is not assigned to in $\ld{t}$, then
by Lemma \ref{lemma:notassigned} and the induction hypothesis, $\ld{\mu_f}(\ld{w}) =
\ld{\mu_1}(\ld{w}) = \ld{\nu_1}(\ld{w}) =
\ld{\nu_f}(\ld{w})$. \\
Now suppose $\ld{z}$ is assigned to in $\ld{t}$.
From the condition $\langle \ld{\lambda}, \rd{\lambda} \rangle \vdash p: \langle \ld{l_p}, \rd{m_p} \rangle$, we know that
for all $\ld{z_1}$ assigned in $\ld{t}$,  $\ld{l_p} ~\ld{\sqsubseteq}~ \ld{\lambda}(\ld{z_1})$ and
for all $\ld{z_1}$ read in $\ld{t}$, $\ld{\lambda}(\ld{z_1}) ~\ld{\sqsubseteq}~ \ld{l_p}$.
Now if $\ld{l} \ld{~\sqsubseteq}~ \ld{l_p}$, then since in $\ld{t}$ no variables $\ld{z_2}$ such that
$\ld{\lambda}(\ld{z_2}) ~\ld{\sqsubseteq}~\ld{l}$ are assigned to.
Therefore  by Lemma \ref{lemma:notassigned}, $\ld{\mu_f}(\ld{w}) =  \ld{\mu_1}(\ld{w}) = \ld{\nu_1}(\ld{w}) = \ld{\nu_f}(\ld{w})$, for all $\ld{w}$ such that $\ld{\lambda}(\ld{w}) ~\ld{\sqsubseteq}~ \ld{l}$. \\
If $\ld{l_p} \ld{~\sqsubseteq}~ \ld{l}$, then for all $\ld{z_1}$ read in $\ld{t}$, $\ld{\lambda}(\ld{z_1}) ~\ld{\sqsubseteq}~ \ld{l_p}$.
Therefore, by assumption on transaction $\ld{t}$, if any variable $\ld{z}$  is assigned an expression $\ld{e}$,
since $\ld{\mu_1}$, $\ld{\nu_1}$ are two stores such that
for all $\ld{z_1} \in \ld{Z_e} = vars(\ld{e})$,  $\ld{\mu_1}(\ld{z_1}) = \ld{\nu_1}(\ld{z_1})$, the value of $\ld{e}$ will be the same.
By this simple security argument, after the transaction $\ld{t}$, we have $\ld{\mu_f}(\ld{z}) = \ld{\nu_f}(\ld{z})$.
Since the transaction happened entirely and atomically in domain $\ld{L}$, we do not have to worry ourselves with changes in the other domain
$\rid{M}$, and do not need to concern ourselves with
the adversarial level $\rd{m}$.\\[1ex]
\textbf{Case} $p$ is $\ld{rd}(\ld{z},\ld{y})$: \ \ \
Thus $\langle \ld{\lambda}, \rd{\lambda} \rangle \vdash \ld{rd}(\ld{z},\ld{y}):
\langle \ld{\lambda}(\ld{z}), \rd{m} \rangle$,
which means $\ld{\lambda}(\ld{y}) ~\ld{\sqsubseteq}~ \ld{\lambda}(\ld{z})$.
If $\ld{l} ~\ld{\sqsubseteq}~ \ld{\lambda}(\ld{z})$,
there is nothing to prove (Lemma \ref{lemma:notassigned}, again).
If $\ld{\lambda}(\ld{z}) ~\ld{\sqsubseteq}~ \ld{l}$, then since by I.H., $\ld{\mu_1}(\ld{y}) = \ld{\nu_1}(\ld{y})$, we have $\ld{\mu_f}(\ld{z}) =
\ld{\mu_1}[\ld{z} := \ld{\mu_1}(\ld{y})](\ld{z}) =
\ld{\nu_1}[\ld{z} := \ld{\nu_1}(\ld{y})](\ld{z}) =
\ld{\nu_f}(\ld{z})$. \\[1ex]
\textbf{Case} $p$ is $\ld{wr}(\ld{x},\ld{z})$: \ \ \
Thus $\langle \ld{\lambda}, \rd{\lambda} \rangle \vdash \ld{wr}(\ld{x},\ld{z}):
\langle \ld{\lambda}(\ld{x}), \rd{m} \rangle$,
which means $\ld{\lambda}(\ld{z}) ~\ld{\sqsubseteq}~ \ld{\lambda}(\ld{x})$.
If $\ld{l} ~\ld{\sqsubseteq}~ \ld{\lambda}(\ld{x})$,
there is nothing to prove (Lemma \ref{lemma:notassigned}, again).
If $\ld{\lambda}(\ld{x}) ~\ld{\sqsubseteq}~ \ld{l}$, then since by I.H., $\ld{\mu_1}(\ld{z}) = \ld{\nu_1}(\ld{z})$, we have $\ld{\mu_f}(\ld{x}) =
\ld{\mu_1}[\ld{x} := \ld{\mu_1}(\ld{z})](\ld{x}) =
\ld{\nu_1}[\ld{x} := \ld{\nu_1}(\ld{z})](\ld{x}) =
\ld{\nu_f}(\ld{x})$. \\[1ex]
\textbf{Case} $p$ is $T_{RL}(\ld{y},\rd{x})$: \ \ \
So $\langle \ld{\lambda}, \rd{\lambda} \rangle \vdash T_{RL}(\ld{y},\rd{x}):
\langle \ld{\lambda}(\ld{y}), \rd{\lambda}(\rd{x}) \rangle$, and
$\ld{\gamma}(\rd{\lambda}(\rd{x})) ~\ld{\sqsubseteq}~ \ld{\lambda}(\ld{y})$.
If $\ld{l} ~\ld{\sqsubseteq}~ \ld{\lambda}(\ld{y})$,
there is nothing to prove (Lemma \ref{lemma:notassigned}, again).
If $\ld{\lambda}(\ld{y}) ~\ld{\sqsubseteq}~\ld{l}$,
then by transitivity, $\ld{\gamma}(\rd{\lambda}(\rd{x})) ~\ld{\sqsubseteq}~ \ld{l}$.
By monotonicity of $\rid{\alpha}$:
$\rid{\alpha}(\ld{\gamma}(\rd{\lambda}(\rd{x}))) ~\rd{\sqsubseteq}~ \rid{\alpha}(\ld{l}) ~=~ \rd{m}$
(By our assumption on $\ld{l}$ and $\rd{m}$).
But by \textbf{LC2}, $\rd{\lambda}(\rd{x}) ~\rd{\sqsubseteq}~ \rid{\alpha}(\ld{\gamma}(\rd{\lambda}(\rd{x})))$.
So by transitivity, $\rd{\lambda}(\rd{x}) ~\rd{\sqsubseteq}~ \rd{m}$.
Now, by I.H., since $\rd{\mu_1}(\rd{x}) = \rd{\nu_1}(\rd{x})$, we have
$\ld{\mu_f}(\ld{y}) =
\ld{\mu_1}[\ld{y} := \rd{\mu_1}(\rd{x})](\ld{y}) =
\ld{\nu_1}[\ld{y} := \rd{\nu_1}(\rd{x})](\ld{y}) =
\ld{\nu_f}(\ld{y})$.
\end{proof}

\section{Related Work}\label{sec:related}
The notion of Lagois connections \cite{MELTON1994lagoisconnections} has surprisingly not been employed much in computer science.
The only cited use of this idea seems to be the work of
Huth \cite{huth1993equivalence} in establishing the correctness of programming language implementations.
To our knowledge, our work is the only one to propose their use in secure information flow control.

Abstract Interpretation and type systems \cite{cousot1997types-as-ai} have been used in secure flow analyses, e.g.,  \cite{cortesi2015datacentricsemantics, cortesi2018} and  \cite{zanotti2002sectypingsbyai}, where security types are defined using Galois connections employing, for instance, a standard collecting semantics.
Their use of two domains, concrete and abstract, with a Galois connection between them, for performing static analyses \textit{within a single domain} should not be confused with our idea of secure connections between independently-defined security lattices of two organisations.

There has been substantial work on SIF in a distributed setting at the systems level.
DStar \cite{zeldovich2008-nsdi} for example, uses  sets of opaque identifiers to define security classes.
The DStar framework extends a \textit{particular} DIFC model \cite{Krohn2007-aa,zeldovich2006-osdi} for operating systems to a distributed network.
The only partial order that is considered in DStar's security lattice is subset inclusion.
So it is not clear if DStar can work on general IFC mechanisms such as FlowCaml \cite{Pottier2003-FlowCaml}, which can use any partial ordering.
Nor can it express the labels of  JiF \cite{myers1999jflow} or Fabric \cite{liu2017fabric} completely.
DStar allows bidirectional communication between processes $R$ and $S$ only if $L_R \sqsubseteq_{O_R} L_S$ and $L_S \sqsubseteq_{O_S} L_R$, i.e., if there is an order-isomorphism between the labels.
Our motivating examples indicate such a requirement is far too restrictive for most practical arrangements for data sharing between organisations.

Fabric \cite{liu2009fabric,liu2017fabric} adds \textit{trust relationships} directly derived from a principal hierarchy to support federated systems with mutually distrustful nodes and allows dynamic delegation of authority.

Most of the previous DIFC mechanisms \cite{myers1999jflow, zeldovich2006-osdi, Krohn2007-aa, efstathopoulos2005asbestos, roy2009laminar, cheng2012aeolus} including Fabric are susceptible to the vulnerabilities illustrated in our motivating examples,  which we will mention in the concluding discussion.

\section{Conclusions and Future Work}\label{sec:conclusion}

Our work is  similar in spirit to Denning's motivation for proposing lattices, namely to identify a simple and mathematically elegant structure in which to frame the construction of scalable secure information flow in a modular manner that preserved the autonomy of the individual organisations.
From the basic requirements, we identified the elegant theory of Lagois connections as an appropriate structure.
Lagois connections provide us a way to connect the security lattices of two (secure) systems in a manner that does not expose their entire internal structure and allows us to reason only in terms of the interfaced security classes.
We believe that this framework is also applicable in more intricate information flow control formulations such as decentralised IFC \cite{myers-phd-tr-award} and models with declassification, as well as formulations with data-dependent security classes \cite{Lourenco2015-ug}.
We intend to explore these aspects in the future.

In this paper, we also proposed a minimal operational model for the transfer of data between the two domains.
This formulation is spare enough to be adaptable at various levels of abstraction (programming language, systems, databases), and is intended to illustrate that the Lagois connection framework can \textit{conserve} security, using non-interference as the semantic notion of soundness.
The choice of non-interference and the use of a type system in the manner of Volpano \textit{et al.} \cite{DBLP:journals/jcs/VolpanoIS96} was to illustrate in familiar terms how those techniques (removed from a particular language formulation) could be readily adapted to work in the context of secure connections between lattices.
In this exercise, we made suitable assumptions of atomicity and the use of fresh variables for communication, so as to avoid usual sources of interference.
We believe that the Lagois connection framework for secure flows between systems is readily adaptable for notions of  semantic correctness other than non-interference, though that is an exercise for the future.

In the future we intend to explore how the theory of Lagois connections constitutes a robust framework that can support the discovery, decomposition, update and maintenance of secure MoUs for exchanging information.
In this paper, we concerned ourselves only with two domains and bidirectional information exchange.
Compositionality of Lagois connections allows these results to extend to chaining connections across several domains.
In the future, we also intend to explore how one may secure more complicated information exchange arrangements than merely chains of bidirectional flow.

We close this discussion with a reminder of why it is important to have a framework in which secure flows should be treated in a modular and autonomous manner.
Consider Myer's DIFC model described in \cite{myers-phd-tr-award}, where a principal can delegate to others the capacity to act on its behalf.
We believe that this notion does not scale well to
large, networked systems since a principal may repose different levels of trust in the various hosts in the network.
For this reason, we believe that frameworks such as Fabric \cite{liu2009fabric, liu2017fabric} may provide more power than  mandated by a principle of least privilege.
In general, since a principal rarely vests unqualified trust in another in all contexts and situations, one should confine the influence of the principals possessing delegated authority to only specific domains.
A mathematical framework that can deal with localising trust and delegation of authority in different domains and controlling the manner in which information flow can be secured deserves a deeper study.
We believe that algebraic theories such as Lagois connections can provide the necessary structure for articulating these concepts.

\paragraph{Acknowledgments.} The second author thanks Deepak Garg for insightful discussions on secure information flow.  Part of the title is stolen from E.M. Forster.

\end{document}